\documentclass[11pt]{article}

\usepackage[utf8]{inputenc}
\usepackage{amsmath, amsthm, amssymb}
\usepackage{xcolor}
\usepackage{algorithm2e}
\usepackage{graphicx}
\usepackage{fullpage}
\usepackage{tikz}
\usepackage{amsthm}
\newtheorem*{theorem*}{Theorem}

\newtheorem{theorem}{Theorem}[section]
\newtheorem{lemma}[theorem]{Lemma}
\newtheorem{proposition}[theorem]{Proposition}

\newtheorem{definition}[theorem]{Definition}

\newtheorem{question}{Question}

\newcommand{\E}[0]{\ensuremath{\mathbb{E}}}
\newcommand{\opt}{\ensuremath{\textsc{opt}}}
\newcommand{\OPT}{\ensuremath{\textsc{OPT}}}

\newcommand{\mvc}[1]{\tau\left(#1 \right)}
\newcommand{\MVC}{\text{MVC}}
\newcommand{\out}{\ensuremath{\Pi}}

\usepackage[numbers]{natbib}
\bibliographystyle{plainurl}

\usepackage{hyperref}
\hypersetup{
    colorlinks=true,
    citecolor=blue,
    linkcolor=blue
}

\title{One-way Communication Complexity of Minimum Vertex Cover in General Graphs}
\author{
Mahsa Derakhshan, Andisheh Ghasemi, Rajmohan Rajaraman\\
Northeastern University, Boston, USA
}

\date{}

\begin{document}

\maketitle

\begin{abstract}
We study the communication complexity of the Minimum Vertex Cover (MVC) problem on general graphs within the \(k\)-party one-way communication model. Edges of an arbitrary \(n\)-vertex graph are distributed among \(k\) parties. The objective is for the parties to collectively find a small vertex cover of the graph while adhering to a communication protocol where each party sequentially sends a message to the next until the last party outputs a valid vertex cover of the whole graph. We are particularly interested in the trade-off between the size of the messages sent and the approximation ratio of the output solution.

It is straightforward to see that any constant approximation protocol for MVC requires communicating \(\Omega(n)\) bits. Additionally, there exists a trivial 2-approximation protocol where the parties collectively find a maximal matching of the graph greedily and return the subset of vertices matched. This raises a natural question: \textit{What is the best approximation ratio achievable using optimal communication of \(O(n)\)?} We design a protocol with an approximation ratio of \((2-2^{-k+1}+\epsilon)\) and \(O(n)\) communication for any desirably small constant \(\epsilon>0\), which is strictly better than 2 for any constant number of parties. Moreover, we show that achieving an approximation ratio smaller than \(3/2\) for the two-party case requires \(n^{1 + \Omega(1/\lg\lg n)}\) communication, thereby establishing the tightness of our protocol for two parties.

A notable aspect of our protocol is that no edges are communicated between the parties. Instead, for any \(1 \leq i < k\), the \(i\)-th party only communicates a constant number of vertex covers for all edges assigned to the first $i$ parties. An interesting consequence is that the communication cost of our protocol is \(O(n)\) bits, as opposed to the typical \(\Omega(n\log n)\) bits required for many graph problems, such as maximum matching, where protocols commonly involve communicating edges.

\end{abstract}

\section{Introduction}\label{sec2}

We study the communication complexity of the Minimum Vertex Cover (MVC) problem for general graphs within the $k$-party one-way communication model. In this model, the edges of an arbitrary $n$-vertex graph \( G = (V, E) \) are distributed among \( k \) parties, labeled from 1 to \( k \). The objective is for the parties to collectively find a small vertex cover of the graph \( G \) while adhering to the following communication protocol: Each party, in sequence, sends a message to the next party, starting from the first one, and ultimately, the last party outputs a valid vertex cover of \( G \). In this work, we focus particularly on the trade-off between the size of the messages sent and the approximation ratio of the output solution.

Extensive literature addresses various problems within this communication model, such as graph connectivity, set cover, minimum cut, and maximum matching. (See e.g., \cite{DBLP:conf/soda/GoelKK12, DBLP:conf/soda/Kapralov13, DBLP:conf/approx/SunW15, DBLP:conf/soda/AssadiK18, DBLP:conf/innovations/GhoshS24, DBLP:conf/icalp/AzarmehrB23}). The model, first introduced by Yao~\cite{DBLP:conf/stoc/Yao79} in 1979 also has close relations with streaming algorithms, which has further contributed to the significant attention it has received. However, to the best of our knowledge, this is the first work to consider the minimum vertex cover problem for general graphs in this communication model.

It is straightforward to see that a message of size \(\Omega(n)\) is necessary to achieve any constant approximation ratio for the Minimum Vertex Cover (MVC) problem; for example, if the entire graph is given to the first party and the MVC has size $n/2$. Moreover, with slight adjustments to a lower bound for the  maximum bipartite matching problem~\cite{DBLP:conf/soda/GoelKK12}, we prove that to achieve any approximation ratio smaller than $3/2$ in the two-party case, \(\omega(n)\) communication is necessary. On the positive side, there exists a trivial 2-approximation protocol where the parties collectively find a maximal matching of the graph greedily and return the subset of vertices matched in the maximal matching. Therefore, a natural question arises:

\begin{question}
    Is it possible to achieve better than a $2$-approximation with the optimal communication complexity of \(O(n)\)? 
\end{question}

We note that this question was, in fact, open even with $n^{2-\Omega(1)}$ communication. In this work, we answer the question affirmatively for any constant number of parties. We design a protocol with the approximation ratio being a function of the number of parties. This protocol is optimal when there are only two parties. The approximation ratio and communication complexity of our protocol are as follows. 

\newcommand{\maintheorem}[0]{For any $k \geq 2$ and any desirably small $\epsilon > 0$, there exists a randomized MVC protocol in the $k$-party one-way communication model with an expected approximation ratio of $(2 - 2^{-k+1} + \epsilon)$, in which each party communicates a message of size \(O_{k, \epsilon}(n)\)\footnote{We use the notation \(O_{k, \epsilon}(\cdot)\) to indicate that the hidden constant may depend on the parameters $k$ and $\epsilon$.}. This approximation ratio is tight for $k=2$ up to a factor of $1+\epsilon$.}

\begin{theorem} \label{thm:upper-mvc}
\maintheorem{}
\end{theorem}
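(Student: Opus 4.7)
My plan splits into the algorithmic upper bound and the lower bound. For the upper bound, I would design a recursive protocol in which party $i$ sends $O_{k,\epsilon}(1)$ candidate vertex covers of $\bigcup_{j\leq i}E_j$, each encoded as a subset of $V$ (hence $O(n)$ bits). Observe that the target ratio $\alpha_k = 2-2^{-k+1}$ satisfies $\alpha_k = 1 + \alpha_{k-1}/2$ with $\alpha_1 = 1$, so I would prove by induction on $k$ that one additional party halves the ``$2$-approximation gap.'' The base case is trivial: a single party sees all of $G$ and outputs an exact MVC. For the inductive step, party~$1$ computes a constant-size family of vertex covers of $E_1$---including the $2$-approximation given by a maximal matching together with more subtle ``augmentation-friendly'' variants---and parties $2, \ldots, k$ essentially run the $(k{-}1)$-party protocol on the residual problem, using party~$1$'s candidates as boundary conditions.

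The halving mechanism itself, which is the crux of the argument, should rely on the following phenomenon. Whenever a protocol outputs the $2|M|$ vertices of a maximal matching $M$, half of these vertices are ``wasted'' relative to $\OPT$, since $\OPT$ needs only one endpoint per matching edge. If a later party sees new edges that constrain which endpoint belongs in $\OPT$, it can swap the other endpoints out. The $O(1)$ candidate covers passed along would correspond to ``commit now'' versus ``defer'' decisions for a handful of carefully chosen matchings; randomized rounding then yields an expected saving of factor $1/2$ per additional party. I would formalize this as a lemma: given a $(2-\delta)\cdot\OPT$ cover of the edges seen so far, together with one more round of communication about new edges, one can produce a cover of size at most $(2-\delta/2 - o(1))\cdot\OPT$. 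Composed $k-1$ times starting from $\delta_1 = 1$, this gives exactly $\alpha_k = 2 - 2^{-k+1} + \epsilon$ in expectation, as claimed.

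For the lower bound at $k=2$, I would adapt the Goel--Kapralov--Khanna \cite{DBLP:conf/soda/GoelKK12} one-way bipartite matching lower bound. Their hard distribution is supported on bipartite graphs where distinguishing ``large matching'' from ``small matching'' requires $n^{1+\Omega(1/\lg\lg n)}$ one-way communication between two parties. Since these instances are bipartite, K\"onig's theorem gives $\tau = \mu$, so a $(3/2-\epsilon)$-approximate MVC protocol would output a cover whose size pins down $\mu$ within a factor small enough to discriminate between the two matching-size regimes of their construction; with small structural tweaks to balance the sides and control degrees so that the VC value inherits the gap of the matching value, this yields the stated $n^{1+\Omega(1/\lg\lg n)}$ bound.

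The main obstacle I anticipate is the halving lemma. Designing the $O(1)$ candidate covers so that, from the next party's perspective, they span enough possible $\OPT$ structures to allow a factor-$2$ improvement is subtle: the invariant to maintain inductively is stronger than ``the best cover so far has ratio $\alpha_i$''; it must track the \emph{joint} structure of the family of candidate covers, because the next party's improvement comes from exploiting their intersections and differences (which matching endpoints remain ``free'' across candidates). Formulating such an invariant and verifying it across $k-1$ recursive layers, while keeping the number of candidate covers at $O_{k,\epsilon}(1)$ rather than letting it blow up geometrically, is where I expect the technical heart of the proof to lie.
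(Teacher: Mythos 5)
You correctly identify the recursive skeleton and the recurrence $\alpha_k = 1 + \alpha_{k-1}/2$, and your lower-bound instinct (adapt Goel--Kapralov--Khanna) is also right in spirit. But the mechanism you propose for the ``halving lemma'' is not what makes the protocol work, and as written it has a genuine gap. You describe the saving as coming from swapping endpoints of a maximal matching and from ``commit now'' versus ``defer'' decisions over a handful of matchings, with randomized rounding supplying the factor $1/2$. This does not give a concrete procedure, and the example in Figure~\ref{fig:example_graph} already shows why naive composition of $3/2$-approximate two-party steps fails: two independent matchings chosen locally can pick exactly the wrong endpoints, forcing a final cover of size $2\cdot\mvc{G}$. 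The paper's actual mechanism is entirely different: given the distribution $\mathcal{D}_B$ over the adversary's future edges, party $i$ computes $c_v = \Pr[v \in \MVC(G)]$, assigns the vertex weight $w_v = 1 - (2-\beta)c_v$ where $2-\beta$ is the approximation ratio achievable by the remaining parties, and commits a \emph{minimum-weight} vertex cover of its own edges. The technical heart is Lemma~\ref{claim:vertex cover weight}, proved via the somewhat delicate rearrangement inequality of Lemma~\ref{lem:middle region}, which bounds the total weight of this committed cover by $\tfrac{\beta}{2}\sum_v c_v$; this is exactly what drives the recursion $\beta_k = \beta_{k-1}/2$. Your proposal contains no analogue of this weight function, no analogue of the weight bound, and no mechanism (such as von Neumann's Minimax Theorem, Lemma~\ref{lemma:mixed}) for converting the existence of a good response to each adversary distribution into a single randomized strategy. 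You also omit the guessing of $\mvc{G}$ (the $b_i = \log_{1+\epsilon} 2^{k-i}+1$ intervals), which is essential to turn the bound on a ratio of expectations into a per-instance guarantee; without it the minimax step does not apply and the ``$o(1)$'' slack in your halving lemma is not justified.

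On the lower bound, the ``pin down $\mu$ via K\"onig'' reduction is looser than you suggest. A $(3/2-\eps)$-approximate vertex cover only sandwiches $\tau = \mu$ within a multiplicative factor of $3/2-\eps$, and the GKK hardness is calibrated for $2/3$-approximate \emph{matching}; the two gaps do not line up for a black-box reduction. The paper instead reuses the Ruzsa--Szemer\'edi construction but runs a fresh counting argument (Lemma~\ref{lemma: union}) directly against vertex-cover outputs, choosing different parameters so that with high probability Alice's message fails to reveal which induced matching is ``live,'' forcing Bob to cover $(1/2-\eps_3)n$ extra vertices. Your sketch would need a comparable direct argument rather than a reduction.
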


An interesting aspect of our protocol is that no edges are communicated between the parties. Instead, for any \(1 \leq i < k\), the \(i\)-th party communicates only a constant number of vertex covers (dependent on \(\epsilon\) and \(k\)) of all the edges assigned to the first \(i\) parties. As a result, the communication cost of our protocol is \(O(n)\) bits, in contrast to the typical \(\Omega(n\log n)\) bits required for approximating many graph problems~\cite{DBLP:conf/approx/SunW15}, such as maximum matching, where protocols often involve communicating edges.

While the problem studied in this work is  information-theoretical rather than computational, it is worth noting that our protocol is not polynomial-time. This is expected, as achieving any approximation ratio better than $2$ for the minimum vertex cover problem is known to be computationally hard under the unique games conjecture~\cite{DBLP:journals/jcss/KhotR08}. Additionally, our protocol can be considered non-explicit because we provide an existential proof for parts of our protocol rather than explicitly constructing it in full (due to the use of von Neumann's Minimax Theorem). However, the protocol can, in principle, be identified through a brute-force search in doubly exponential time.

\paragraph*{Relation to streaming algorithms.} A key motivation behind studying the communication complexity of one-way protocols for graph problems is due to its relations with the streaming model~\cite{feigenbaum2005graph}. In the single-pass streaming model, the edges of an $n$-vertex graph arrive sequentially in a stream in an arbitrary order. A streaming algorithm needs to construct a solution using a limited memory smaller than the input size.  Any lower bounds established for the communication complexity in the one-way model directly imply lower bounds for the memory requirements in the streaming model. This, indeed, has been the standard approach for proving streaming lower bounds. Similarly, any protocol developed within the communication framework can provide insights into designing streaming algorithms.

A long-standing, central open question in streaming literature is whether it is possible to achieve better than a 0.5-approximate matching in \( \Tilde{O}(n) \) space with only a single pass. Even estimating the size of the maximum matching (the dual of minimum vertex cover) within a factor greater than 0.5 remains unresolved in this setting. The same holds true for surpassing a 2-approximation for Minimum Vertex Cover (MVC) in bipartite graphs, as well as the seemingly more challenging problem of approximating MVC on general graphs.

Our work highlights an important point for those aiming to prove the impossibility of better-than-2 approximation for streaming MVC within \( O(n) \) space: the standard approach of using the one-way communication model with a constant number of parties does not suffice, as we present a protocol that achieves better-than-2 approximation when \( k \) is a constant. On a positive note, we hope that our protocol will inspire the development of space-efficient streaming algorithms.

\subsection{Our Techniques}
In this section, we provide a brief overview of our protocol. A more detailed explanation of the protocol for two parties is presented in Section~\ref{sec:warmup}, and the multi-party protocol is formally introduced and analyzed in Section~\ref{sec:k-party}.  A lower bound for the two-party case is presented in Section~\ref{sec:lower bound}, and the proof of main theorem put together in Section~\ref{sec:main}.  

In our protocol, each party \(i\), for \(1 \leq i < k\), communicates a set of vertex covers of the edges assigned to parties \(1\) through \(i\). This information is clearly sufficient for the last party to obtain a valid vertex cover of the entire graph. Interestingly, it turns out that communicating only a constant number\footnote{This constant is independent of $n$, the number of vertices, but depends on $k$, the number of parties, and $\epsilon$.} of vertex covers provides enough information to find a sufficiently small vertex cover of the whole graph. To construct this message, each party \(i\) must solve several instances of the following problem: Let \(S\) be one of the vertex covers communicated to party \(i\), and let  \(\mathcal{D}_B\) be a distribution on the edges assigned to the future parties (\(i+1\) to \(k\)). Given this information, party \(i\) aims to add a subset of vertices to \(S\) to cover her edges while ensuring that this does not significantly increase the final approximation ratio.

Let $c_v$ denote the probability of the vertex belonging to the optimal solution given the distribution \(\mathcal{D}_B\). Player $i$ must choose which vertices to add to $S$. To make this decision, she assigns weights to the vertices, where higher weights show a poorer fit for inclusion in the solution. The player then computes the minimum-weighted vertex cover based on these assigned weights.
We aim to design the weight assignments so that vertices with higher weights are less likely to be selected. Additionally, we aim to provide earlier players with less flexibility in their choices, as they have access to less information. To achieve this, we define the weight function as:
\begin{equation}
    w_v= 1- (2-\beta_{k-i})c_v,
\end{equation}
where $(2-\beta_{k-i})$ is the approximation ratio achievable in a protocol with $k-i$ parties (the number of remaining parties). This means that if  $i$ is the second to last party (i.e., $k-i=1$), then she sets $w_v=1-c_v$ since a single party protocol finds the optimal solution ($\beta_{1} = 1$). Finally, analyzing the approximation ratio boils down to upper-bounding the weight of this minimum weight vertex cover which we discuss further in Sections~\ref{sec:ojirjgioer} and \ref{sec:technical}. Once this upper bound is established, we use an inductive proof to show that party \(i\) can add a subset of vertices to \(S\) ensuring
\begin{align}
    \frac{\E\big[|\text{solution outputted by the final party}|\big]}{\E\big[|\text{optimal solution}|\big]} \leq 2 - 2^{-k+1},
\end{align}
    
where the expectation is taken over \(\mathcal{D}_B\). Ideally, we would prefer this to be an instance-wise approximation ratio, where the upper bound holds for the expectation of the ratio rather than for the ratio of expectations. In such a case, a simple application of von Neumann's Minimax Theorem would suffice to establish the existence of our desired protocol. Nevertheless, this upper bound, combined with a simple trick (also used by Assadi and Behnezhad~\cite{DBLP:conf/approx/AssadiB21}), can still give us the same approximation ratio with a small additive loss. 


The key idea is that if the Minimum Vertex Cover of all instances in the support of the distribution $\mathcal{D}_B$
  have nearly the same size, then the per-instance approximation ratio and the ratio of expectations will also be approximately equal.
To use this, each party discretizes the range of possible optimal solution sizes into a constant number of intervals. Each party then solves a separate instance of the problem under the assumption that the true optimal solution size falls within a particular interval. For each of these size estimates, the party is facing a distinct problem instance.
The number of size estimates (or "guesses") each party makes depends on their position among the $k$ parties. Later parties in the protocol are allowed more error, meaning they make a larger number of guesses to refine their estimate.
Ultimately, the total number of vertex covers that the last party receives is proportional to the product of the guesses made by the first $k-1$ parties.
See Figure~\ref{fig:kparty} for a depiction of this protocol.

\subsection{Related Work}

    In recent works, several advancements in the communication complexity of various graph-related problems have been made. Assadi et al. \cite{assadi2016tight} established optimal lower bounds for approximating the Set Cover problem in the two-party communication model, which also corresponds with a streaming algorithm, thereby addressing both streaming and communication complexities. Additionally, Abboud et al. \cite{abboud2021smaller} investigated the same communication model for exact answers, which included back-and-forth communication, while Dark et al. \cite{dark2020optimal} modified this model to include deletions. The model has also been studied by Assadi et al. \cite{assadi2017randomized} in a simultaneous framework under random partitions. Moreover, Naidu and Shah \cite{naidu2022space} as well as Chitnis et al. \cite{chitnis2016kernelization} contributed relevant insights in the dynamic data stream model, particularly concerning the vertex cover problem.

Most related to our work is the rich literature on the communication complexity of the maximum matching problem. (See e.g., \cite{DBLP:conf/soda/GoelKK12,DBLP:conf/icalp/Bernstein20,DBLP:conf/soda/AssadiB19,DBLP:journals/talg/LeeS20}) Within the $\Tilde{O}(n)$ one-way communication regime, Goel, Kapralov, and Khanna have established a protocol for bipartite matching that achieves a tight approximation ratio of $2/3$ in the two-party case \cite{DBLP:conf/soda/GoelKK12}. For general graphs, the same tight approximation ratio is achieved by Assadi and Bernstein \cite{DBLP:conf/soda/AssadiB19}. In scenarios involving more than two parties ($k>2$), the work of Behnezhad and Khanna~\cite{DBLP:conf/soda/BehnezhadK22} implies the existence of protocols with an approximation ratio of $0.6$ and $0.53$, respectively, for three and four parties, and $0.5+\Omega_{1/k}(1)$ for any $k>4$. Furthermore, Singla and Lee~\cite{DBLP:journals/talg/LeeS20} provide  $0.5+2^{-O(k)}$ approximation protocols for the more restricted model where each party has to irrevocably add some of their edges to the matching instead of communicating an arbitrary message of size $\Tilde{O}(n)$.  

Given the duality between maximum matching and minimum vertex cover (MVC) on bipartite graphs, these results can be used to approximate the size of bipartite MVC. However, simply having an approximate matching does not suffice to construct an approximate MVC. Although it would not be surprising if some of these works also have implications for finding an approximate MVC for bipartite graphs, to the best of our knowledge, there is no explicit study of MVC within the model discussed in this paper, even for bipartite graphs. General graphs, however, which are the main focus of this work, are a completely different story, as the MVC can be up to two times the maximum matching.  Therefore, the ideas developed for approximating matching matching are not of much use here. 

Another line of work related to this paper is the study of the stochastic minimum vertex cover problem~\cite{DBLP:conf/soda/BehnezhadBD22, DBLP:conf/stoc/DerakhshanDH23, DBLP:conf/innovations/Derakhshan25}.  In this problem, we are given a graph $G=(V, E)$ and an existence probability for each edge $e \in E$. Edges of $G$ are realized (or exist) independently with these probabilities, forming the realized subgraph $\mathcal{G}$. The existence of an edge in $\mathcal{G}$ can only be verified using edge queries. The goal of this problem is to find a near-optimal vertex cover of $\mathcal{G}$ using a small number of queries.  Most related to our work is the 1.5 approximation algorithm designed by Derakhshan et al~\cite{DBLP:conf/stoc/DerakhshanDH23}. Their algorithm first commits a subset of vertices $S$ to the final solution, then queries any edge in $G$ which is not covered by $S$, and adds an MVC of the realized edges to the solution. This effectively results in a solution which covers all edges of $\mathcal{G}$. The core of their algorithm is their choice of set $S$.

The stochastic vertex cover algorithm of~\cite{DBLP:conf/stoc/DerakhshanDH23} can be used to attack the special case of the two-party communication model as follows. Let us assume that Alice (the first party) in addition to her own graph also has a distribution over Bob's graph (the second party). In other words, she has a distribution over the whole graph with her edges having an existence probability of one. (Unlike the stochastic model, existence of edges are not independent here.) We observe that in this case if Alice follows the algorithm of~\cite{DBLP:conf/stoc/DerakhshanDH23} and only communicates subset $S$ to  Bob, they will be able to find a vertex cover whose expected size is at most 1.5 times the expected size of the optimal solution. Moreover, we show that the assumption regarding the knowledge of distribution can be lifted via techniques from~\cite{DBLP:conf/approx/AssadiB21} allowing for a 1.5 approximation in the 2-party communication model. However, for the multi-party case, which is the main focus of this work, while the technical insights from this work are helpful, it does not imply any approximation ratio better than 2.  To address this challenge, we extend the argument of~~\cite{DBLP:conf/stoc/DerakhshanDH23} to bound the cost of a suitably weighted vertex cover in the 2-party case and leverage that to further generalize to the $k$-party case, beating the threshold of 2 for the approximation ratio.


\subsection{Preliminaries}
 In this work, we consider the one-way 
$k$-party communication model, focusing on randomized protocols where each party utilizes private random bits. The primary interest in this model is information-theoretical, with parties assumed to be computationally unbounded. The communication cost is measured by the worst-case length of the messages exchanged between any two parties. For standard definitions and further details, we refer to the textbook by Kushilevitz and Nisan~\cite{DBLP:books/daglib/0011756}.

We have a base graph $G = (V, E)$ which is distributed among $k$ parties in a way that each party $i$ has the graph $G_i = (V, E_i)$ where $\bigcup_{l= 1}^{k} E_l = E$. The objective of the parties is to collectively find a small vertex cover of the graph \( G \) while adhering to the following communication protocol: Starting from the first party, each party $1 \leq i < k$, in turn, sends a message $M_i$ and ultimately, the last party outputs a valid vertex cover of \( G \).

\noindent \textbf{Notation.} We define $\MVC(.)$ as a function that, for any given graph, returns its minimum vertex cover. The size of the minimum vertex cover for an input graph is denoted by $\mvc{.}$, and $\OPT$ represents the minimum vertex cover of the input  graph $G$, implying that $|\OPT| = \mvc{G}$ is the optimal solution size.

\section{An Overview of the Two-Party Protocol}
\label{sec:warmup}

As a warm-up, we consider a two-party version of the problem with
Alice and Bob as the first and second parties, respectively. Let \(
G_A=(V, E_A) \) and \( G_B=(V, E_B) \) represent the subgraphs given
to Alice and Bob. In this section, we provide an overview of our
protocol for two parties with $O(n)$ communication for a
constant $\epsilon > 0$ that can be made arbitrarily small. We later prove this is a \(3/2 +
\epsilon\)-approximate protocol. 

Consider a simplified version of the problem where Alice, in addition
to \( G_A \), also knows the size of the minimum vertex cover of the
whole graph \( G = G_A \cup G_B \). We will later discuss how to lift
this assumption. For this case, the protocol we consider is simple:
Alice communicates a carefully constructed vertex cover \( X \) of her
subgraph (not necessarily the minimum vertex cover) to Bob.  This
construction is randomized, and we argue that it is possible to
pick \( X \) that will lead to an expected approximation ratio of
\(3/2\).


Let us analyze the approximation ratio achieved by a fixed \( X\). The final solution picked by Bob will be the union of \( X \) and
a minimum vertex cover of \( G_B[V \backslash X] \). Hence, the
approximation ratio will be:
\begin{equation} \label{eq:jiiref}
    \frac{\E[|X|+\mvc{G_B[V \backslash X]}]}{\mvc{G}}
\end{equation}

\noindent \textbf{A Two-Player Game.}  
We can view the problem faced by Alice as a zero-sum two-player game
between her and an adversary, both of whom know \( G_A \) and the size
of the optimal solution represented by \(\opt\). Alice's strategy is
to select a subset \( X \subseteq V \) of vertices that covers \( G_A
\). The adversary's strategy is to select \( G_B \) such that
\(\tau(G_B \cup G_A)=\opt\). We let the adversary's utility be the
approximation ratio defined in Equation~\ref{eq:jiiref} since that is
what Alice wants to minimize. Note that a mixed strategy of Alice in
this game (a distribution over vertex covers of \( G_A \)) is
equivalent to a randomized algorithm for picking \( X \) in the
original problem. Therefore, we are interested in proving that Alice
has a mixed strategy for selecting \( X \) such that no pure strategy
of the adversary obtains utility larger than 3/2 against it. The Minimax Theorem due to von Neumann~\cite{vonNeumann:1928:TGG}
implies the following about this game:

\begin{proposition}\label{prop:kjpiwqe9ji}
In the described game, if for any mixed strategy chosen by the
adversary, there exists a pure strategy \( X \) for Alice such that
the adversary's expected utility is at most \( 3/2 \), then there
exists a mixed strategy for Alice that ensures the expected utility of
the adversary is at most \( 3/2 \) for any strategy the adversary
employs.
\end{proposition}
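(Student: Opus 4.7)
The plan is to view the proposition as a direct invocation of the classical von Neumann Minimax Theorem, so the main work is verifying that the game described fits the theorem's hypotheses and then translating the quantifier exchange back into the language of the proposition.

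First, I would make the strategy spaces precise and observe that both are finite. Alice's pure strategies are vertex covers $X \subseteq V$ of $G_A$, of which there are at most $2^n$. The adversary's pure strategies are graphs $G_B$ on vertex set $V$ satisfying $\tau(G_A \cup G_B) = \opt$; since there are only finitely many graphs on $n$ vertices, this set is finite as well. Hence the payoff matrix $u(X, G_B) = \big(|X| + \tau(G_B[V \setminus X])\big)/\opt$ is a finite real matrix, exactly the setting in which von Neumann's theorem applies.

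Next, I would restate the hypothesis in minimax language. The assumption that for every mixed strategy $\mu$ of the adversary there exists a pure strategy $X$ of Alice with $\E_{G_B \sim \mu}[u(X,G_B)] \leq 3/2$ means
\begin{equation*}
\max_{\mu}\, \min_{X}\, \E_{G_B \sim \mu}\bigl[u(X,G_B)\bigr] \;\leq\; 3/2,
\end{equation*}
where the inner minimum is over pure strategies of Alice (equivalently, over mixed strategies, since expected utility against a fixed mixed adversary is linear in Alice's distribution and is therefore minimized at a vertex of the simplex). By von Neumann's Minimax Theorem, this quantity equals
\begin{equation*}
\min_{\nu}\, \max_{G_B}\, \E_{X \sim \nu}\bigl[u(X,G_B)\bigr],
\end{equation*}
where now the outer minimum ranges over distributions $\nu$ on vertex covers of $G_A$ and the inner maximum is over pure adversary strategies (again equal to the supremum over mixed adversary strategies by linearity). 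Therefore some distribution $\nu^*$ attains value at most $3/2$ against every pure, hence every mixed, adversary strategy, which is exactly the conclusion.

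The only conceptual subtlety, and arguably the main obstacle, is confirming that pure-strategy best responses suffice on both sides so that the classical finite Minimax Theorem applies cleanly; once the strategy sets are recognized as finite and the payoff as bilinear in the mixing distributions, this is immediate. I would close by noting that, as emphasized in the surrounding text, this is why it suffices in the remainder of Section~\ref{sec:warmup} to exhibit, for each fixed distribution over $G_B$, a single good pure cover $X$ for Alice: such an existence statement, via Proposition~\ref{prop:kjpiwqe9ji}, upgrades automatically to the existence of the desired randomized protocol.
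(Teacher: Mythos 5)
Your proposal is correct and follows the same route as the paper: the paper asserts the proposition as an immediate consequence of von Neumann's Minimax Theorem without writing out a proof, and you supply exactly the standard verification (finite strategy sets, bilinear payoff, quantifier exchange via the minimax identity) that justifies this invocation.
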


Due to Proposition~\ref{prop:kjpiwqe9ji},
the problem is reduced to proving that given any distribution \(
\mathcal{D}_B \) over \( G_B \), there exists a vertex cover \( X \)
of \( G_A \) with an instance-wise approximation ratio of at most
3/2. Given distribution $\mathcal{D}_B$, let \( c_v \) denote the
probability of a vertex being in the optimal solution of $G_A \cup
G_B$, assuming $G_B$ is drawn from $\mathcal{D}_B$.  That is,
\[ c_v := \Pr_{G_B \sim \mathcal{D}_B}[v \in \MVC(G_A \cup G_B)]. \]

After calculating these probabilities (which is possible without computational limitations), Alice constructs a
vertex-weighted subgraph with the weight of any vertex \( v \in V \)
being \( w_v= 1-c_v \). She then lets \( X \) be the minimum weight
vertex cover of \( G_A \).  These steps are formalized in
Algorithm~\ref{alg:329}.

\RestyleAlgo{boxruled}
\LinesNumbered
\begin{algorithm}[ht]
  \caption{Minimum Weighted Vertex Cover 2-Party Algorithm\label{alg:329}}
  Let \( c_v = \Pr[v \in \MVC(G_A \cup G_B)] \) given distribution \( \mathcal{D}_B \).\\
  Let \( w_v = 1 - c_v \) for all \( v \in V \).\\
  Alice finds a minimum weighted vertex cover \( X \) of \( G_A \).\\
  Alice sends \( X \) to Bob.\\
  Bob finds a minimum vertex cover \( M \) of \( G_B[V \backslash X] \).\\
  Bob outputs \( X \cup M \).
\end{algorithm}

We now provide some intuition behind Algorithm~\ref{alg:329}. Let \(
W(X) \) be the sum of the weight of the vertices in \( X \). We claim
that \( W(X) \) is the expected difference between \(\opt\) and the
size of the solution outputted by the algorithm. For any vertex in \(
X\), its weight is essentially the difference between its
contribution to \( X \) and its contribution to the optimal
solution. In other words, this weight is the {\em cost} of including
that vertex in \( X \).
We observe that since Bob calculates the
exact minimum vertex cover (MVC) for the edges that \( X \) does not
cover, the total weight \( W(X) \) acts as an upper bound on the
expected difference between the optimal solution and the solution
found by the algorithm.  This is exactly what we want to minimize,
which is why Alice sets weights as \( w_v= 1-c_v \) and takes a
minimum weight vertex cover of \( G_A \).

Following this observation, to prove the desired approximation ratio, we need to show an upper bound of 
\begin{equation}\label{eq:weightik}
    W(X)\leq \E_{G_B \sim \mathcal{D}_B}[\mvc{G}]/2. 
\end{equation} 
Proving this turns out to be a technically challenging problem, which
we tackle in Section~\ref{sec:technical} (in more generality)\footnote{Inequality~\ref{eq:weightik}  follows from Lemma~\ref{claim:vertex cover weight} with parameter $\beta=1$.}.  Our proof uses the
probabilistic method and shows that the expected weight of a
particular randomized vertex cover of $G_A$ is at most half the
optimal vertex cover cost of $G$.  To construct this randomized vertex
cover, we follow a similar approach as the stochastic vertex cover
algorithm of~\cite{DBLP:conf/stoc/DerakhshanDH23}.  In particular, we organize the vertices into
three groups based on the $c_v$ values and a suitable threshold $t
> 1/2$: $c_v \in (0, 1 - t]$, $c_v \in (1 - t, t]$, $(t,
    1]$.  We include all vertices from the third group.
    The vertices in the first group can be omitted from a vertex
      cover as all their edges in $G_A$ are to vertices in the third group. Finally, we include
      those vertices from the middle group that are selected in a
      minimum vertex cover of a graph $G_A \cup G_B$ drawn randomly
      based on the distribution $\mathcal{D}_B$.
      We extend the
      argument of~\cite{DBLP:conf/stoc/DerakhshanDH23} to analyze the weight of the above
      randomized vertex cover, which we then leverage to both
      establish the inequality~\ref{eq:weightik} as well as crucially generalize
      to the case of $k$ players.

Using inequality~\ref{eq:weightik}, we then prove for the 2-player case that\begin{equation}
     \frac{|X|+ \E_{G_B \sim \mathcal{D}_B}[\mvc{G_B[V \backslash X]}]}{\E_{G_B \sim \mathcal{D}_B}[\mvc{G}]} \leq 3/2. \end{equation} 
Here, $|X|+ \E_{G_B \sim \mathcal{D}_B}[\mvc{G_B[V \backslash X]}]$ is the expected size of the output since Bob returns the union of $X$ and a minimum vertex cover of the remaining graph. Observe that by definition of weights, for any vertex we have $c_v+w_v=1$.  As a result we can write 
\begin{align*}
    |X|+ \E[\mvc{G_B[V \backslash X]}] \leq \sum_{v\in X} (c_v+ w_v) + \sum_{v\notin X} c_v = \E[\mvc{G}] + \sum_{v\in X} w_v  \stackrel{\eqref{eq:weightik}}{\leq} 3\E[\mvc{G}]/2.  
\end{align*}
This, however, is not an instance-wise approximation ratio.  By
invoking Yao's minimax principle and using the assumption that the
size of the solution is fixed, we are able to ensure that there is a
randomized protocol that achieves an instance-wise approximation ratio
of 3/2 under the fixed size assumption.

\noindent\textbf{Lifting the Knowledge of Size.}
So far, we have discussed our protocol, assuming that Alice knows the
size of the optimal solution. To lift this assumption, we use a
standard approach which is also used by Assadi and Behnezhad~\cite{DBLP:conf/approx/AssadiB21}. Given a constant \(
\epsilon \in (0,1) \), we create \( \lceil \log_{1+\epsilon}2
\rceil\) instances of the problem. In the \( i \)-th instance, Alice
guesses the size of the optimal solution to be in the range
\[ ((1+\epsilon)^{i-1} \mvc{G_A}, (1+\epsilon)^i \mvc{G_A}). \]
For each of her guesses, she communicates a different vertex cover of
\( G_A \). It is easy to show that if one of these guesses is correct,
the subset Alice communicates for that guess allows Bob to find a
\(3/2+\epsilon\) approximation ratio. These guesses together cover the
case of \( \mvc{G} < 2\mvc{G_A} \). To address the case of \( \mvc{G}
\geq 2\mvc{G_A} \), Alice also communicates an MVC of her graph. In
this scenario, the size of the output would be at most
\[ \mvc{G} + \mvc{G_A} \leq  3\mvc{G}/2. \]
Putting the pieces together, Alice can guarantee an expected approximation ratio of \(3/2+\epsilon\) by communicating a message of size \( n \lceil \log_{1+\epsilon}2\rceil \).

\section{The $k$-Party Protocol}
\label{sec:k-party} 
In this section, we present the general $k$-party protocol, which
builds on the sketch of the $2$-party protocol presented in
Section~\ref{sec:warmup}. Our protocol simulates a series of two-party protocols between any party $i$ and a second party encapsulating all the remaining parties $i+1$ to $k$.

However, it is important to note that a naive approach of composing arbitrary
$3/2$-approximate two-party protocols may not yield an effective $k$-party protocol.  For instance, consider the graph presented in Figure~\ref{fig:example_graph} consisting of sets $A$, $B$, $C$, and $D$, each containing $n/4$ vertices. These sets are connected by two bipartite matchings, $M_1$ between $A$ and $B$, and $M_2$ between $C$ and $D$, and a complete graph $K$ over $B \cup C$.  For this graph, the optimal vertex cover is $B \cup C$ of size $n/2$.  Suppose the first party receives $M_1$, the second party receives $M_2$ and the third party receives $K$.  A $3/2$-approximate two-party protocol may select $M_1$ for the first party and $M_2$ for the second party, leading to an overall solution of size $n$, which would be $2$-approximate.  We address this challenge by carefully selecting vertices, prioritized by the $c_v$ values, and setting weights so that the optimal solution for the remaining graph becomes progressively smaller as the protocol proceeds with the parties.  

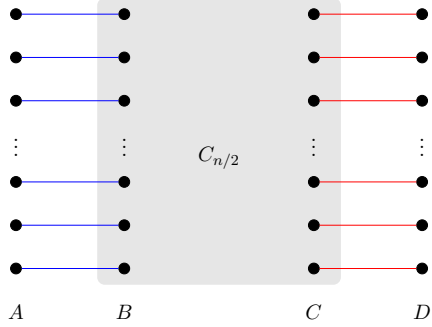
\begin{figure}
    \centering
    \begin{minipage}{0.5\textwidth} 
        \centering
        \scalebox{0.72}{
    \begin{tikzpicture}
        \def\dx{2} 
        \def\dy{0.8} 
        \def\gap{1.5} 
        \def\widegap{3.5} 

        \fill[gray!20, rounded corners] (-0.5, 0.3) rectangle (\widegap + 0.5, -4 * \dy - \gap - 0.3);
        
        \node at (\widegap / 2, -2 * \dy - \gap / 2 - 0.3) {$C_{n/2}$};

        \node[draw, circle, fill=black, inner sep=2pt] (farleft1) at (-\dx, 0) {};
        \node[draw, circle, fill=black, inner sep=2pt] (farleft2) at (-\dx, -\dy) {};
        \node[draw, circle, fill=black, inner sep=2pt] (farleft3) at (-\dx, -2 * \dy) {};
        \node at (-\dx, -2 * \dy - \gap / 2) {$\vdots$}; 
        \node[draw, circle, fill=black, inner sep=2pt] (farleft4) at (-\dx, -2 * \dy - \gap) {};
        \node[draw, circle, fill=black, inner sep=2pt] (farleft5) at (-\dx, -3 * \dy - \gap) {};
        \node[draw, circle, fill=black, inner sep=2pt] (farleft6) at (-\dx, -4 * \dy - \gap) {};
        \node at (-\dx, -4 * \dy - \gap - 1 + 0.2) {$A$};

        \node[draw, circle, fill=black, inner sep=2pt] (left1) at (0, 0) {};
        \node[draw, circle, fill=black, inner sep=2pt] (left2) at (0, -\dy) {};
        \node[draw, circle, fill=black, inner sep=2pt] (left3) at (0, -2 * \dy) {};
        \node at (0, -2 * \dy - \gap / 2) {$\vdots$}; 
        \node[draw, circle, fill=black, inner sep=2pt] (left4) at (0, -2 * \dy - \gap) {};
        \node[draw, circle, fill=black, inner sep=2pt] (left5) at (0, -3 * \dy - \gap) {};
        \node[draw, circle, fill=black, inner sep=2pt] (left6) at (0, -4 * \dy - \gap) {};
        \node at (0, -4 * \dy - \gap - 1 + 0.2) {$B$};

        \node[draw, circle, fill=black, inner sep=2pt] (right1) at (\widegap, 0) {};
        \node[draw, circle, fill=black, inner sep=2pt] (right2) at (\widegap, -\dy) {};
        \node[draw, circle, fill=black, inner sep=2pt] (right3) at (\widegap, -2 * \dy) {};
        \node at (\widegap, -2 * \dy - \gap / 2) {$\vdots$}; 
        \node[draw, circle, fill=black, inner sep=2pt] (right4) at (\widegap, -2 * \dy - \gap) {};
        \node[draw, circle, fill=black, inner sep=2pt] (right5) at (\widegap, -3 * \dy - \gap) {};
        \node[draw, circle, fill=black, inner sep=2pt] (right6) at (\widegap, -4 * \dy - \gap) {};
        \node at (\widegap, -4 * \dy - \gap - 1 + 0.2) {$C$};

        \node[draw, circle, fill=black, inner sep=2pt] (farright1) at (\widegap + \dx, 0) {};
        \node[draw, circle, fill=black, inner sep=2pt] (farright2) at (\widegap + \dx, -\dy) {};
        \node[draw, circle, fill=black, inner sep=2pt] (farright3) at (\widegap + \dx, -2 * \dy) {};
        \node at (\widegap + \dx, -2 * \dy - \gap / 2) {$\vdots$}; 
        \node[draw, circle, fill=black, inner sep=2pt] (farright4) at (\widegap + \dx, -2 * \dy - \gap) {};
        \node[draw, circle, fill=black, inner sep=2pt] (farright5) at (\widegap + \dx, -3 * \dy - \gap) {};
        \node[draw, circle, fill=black, inner sep=2pt] (farright6) at (\widegap + \dx, -4 * \dy - \gap) {};
        \node at (\widegap + \dx, -4 * \dy - \gap - 1 + 0.2) {$D$};

        \foreach \i in {1,2,3,4,5,6} {
            \draw[blue] (farleft\i) -- (left\i);
        }

        \foreach \i in {1,2,3,4,5,6} {
            \draw[red] (right\i) -- (farright\i);
        }

    \end{tikzpicture}
     }
    \end{minipage}%
    \begin{minipage}{0.45\textwidth} 
        \centering
          \caption{The blue matching $M_1$ is given to the first party, the red matching $M_2$ to the second party, and a complete graph (non-bipartite) among vertices in the box to the third party.}
            \label{fig:example_graph}
    \end{minipage}
\end{figure}

We give an overview of our $k$-party protocol.  Following
the framework of the 2-party protocol, the first party makes several
guesses about the size of $\opt$, with the number of guesses depending
only on $k$ and parameter $\epsilon > 0$, which can be made
arbitrarily small.  For each guess (which is an interval) the first party constructs a vertex cover for her graph and communicates
all these vertex covers to the next party. Figure~\ref{fig:kparty} depicts our protocol.

For any $1 < i < k$, the $i$-th party receives a message $M$, which is
a set of sets of vertices. Each element $S \in M$ is a vertex cover of
$\cup_{1 \leq l < i} G_l$. For each $S \in M$, party $i$ constructs a
new instance of the problem. She assumes $S$ will be in the final
vertex cover and lets $G_i' = G_i[V\backslash S]$ be the subgraph of
$G_i$ not covered by $S$. At this point, party $i$ faces a similar
problem to the first party in a $(k-i)$-party setting with the
assigned subgraph being $G_i'$.  Therefore, she starts by making a
number of guesses about $\mvc{G[V\backslash S]}$.  We use $b_i$ to refer to the number of guesses party $i$ needs to make and later discuss its exact value.  For each guess, she
picks a vertex cover of $G_i'$ and takes its union with $S$ as a
vertex cover of $\cup_{1 \leq j \leq i} G_j$. Repeating this for all
elements of $M$ results in constructing \[|M| \times [\text{number of
    guesses on } \mvc{G[V\backslash S]} \text{ for any } S \in M]\]
vertex covers of $\cup_{1 \leq j \leq i} G_j$. She then communicates
all of these to the next party. Since the number of guesses is only a
function of $k$ and $\epsilon$, the size of the message is $O_{k,
  \epsilon}(n)$.

Finally, the last party, after receiving a message $M$, finds the
smallest vertex cover of her graph that includes at least one subset
of vertices $S \in M$. This ensures that the final output is a valid
vertex cover for the entire graph.

Two points remain to be addressed about the protocol. Consider party
$i$ and the message $M$ she receives. First, for any $S \in M$, we
need to describe how the party forms guesses about the size of the
minimum vertex cover of $G[V \setminus S]$. Party $i$ makes $b_i =
\log_{1 + \epsilon}2^{k-i}+1$ guesses about $\mvc{G[V \setminus
    S]}$. For any $1<l<b_i$, the $l$-th guess is: \[ (1 +
\epsilon)^{l-1} \mvc{G'_i} \leq \mvc{G[V \setminus S]} < (1 +
\epsilon)^{l} \mvc{G'_i}.\] 
There is an additional guess
$\mvc{G[V \setminus S]} \geq (1 + \epsilon)^{b_i} \mvc{G'_i}$ to cover
the remaining possibilities.

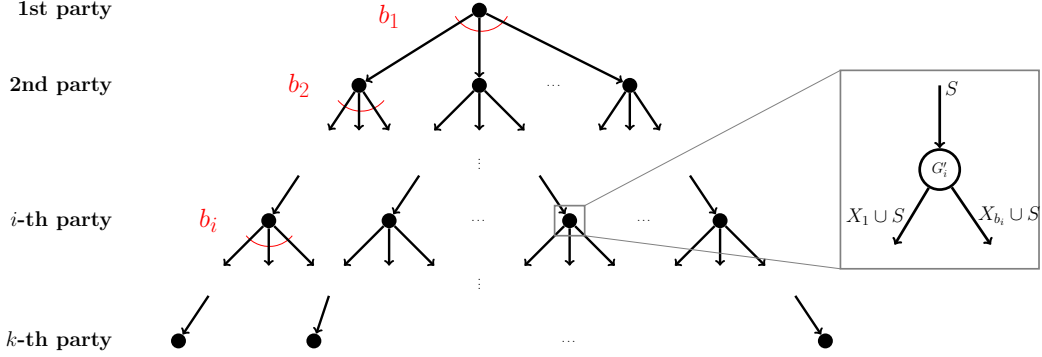
\begin{figure}[h]
    \centering

\scalebox{0.4}{ 

\begin{tikzpicture}
    \node[circle, fill=black, inner sep=5pt] (A) at (0,0) {};
    \node[left][scale=1.4] at (-12,0) {\Large \textbf{1st party}};  

    \draw[red, thick] (-0.8,-0.2) arc (210:330:1);
    \node[red] at (-3,-0.2) {\Huge $b_1$};
    
    \def\y{-2.5}
    \node[circle, fill=black, inner sep=5pt] (B1) at (-4,\y) {};
    \node[circle, fill=black, inner sep=5pt] (B2) at (0,\y) {};
    \node[circle, fill=black, inner sep=5pt] (B3) at (5,\y) {};
    \node at (2.5,\y) {\dots};
    \node[left][scale= 1.4] at (-12,\y) {\Large \textbf{2nd party}};  

    \draw[red, thick] (-4.7,\y-0.5) arc (220:320:1);
    \node[red] at (-6,\y) {\Huge $b_2$};
    
    \draw[->, line width=0.8mm] (A) -- (B1);
    \draw[->, line width=0.8mm] (A) -- (B2);
    \draw[->, line width=0.8mm] (A) -- (B3);

    \draw[->, line width=0.8mm] (B1) -- ++(-1,-1.5);
    \draw[->, line width=0.8mm] (B1) -- ++(0,-1.5);
    \draw[->, line width=0.8mm] (B1) -- ++(1,-1.5);

    \draw[->, line width=0.8mm] (B2) -- ++(-1.5,-1.5);
    \draw[->, line width=0.8mm] (B2) -- ++(0,-1.5);
    \draw[->, line width=0.8mm] (B2) -- ++(1.5,-1.5);

    \draw[->, line width=0.8mm] (B3) -- ++(-1,-1.5);
    \draw[->, line width=0.8mm] (B3) -- ++(0,-1.5);
    \draw[->, line width=0.8mm] (B3) -- ++(1,-1.5);

    \node at (0, \y-2.5) {\vdots};

    \def\yy{-7}
    \node[circle, fill=black, inner sep=5pt] (C1) at (-7,\yy) {};
    \node[circle, fill=black, inner sep=5pt] (C2) at (-3,\yy) {};
    \node[circle, fill=black, inner sep=5pt] (C3) at (3,\yy) {};
    \node[circle, fill=black, inner sep=5pt] (C4) at (8,\yy) {};
    \node[left][scale=1.4] at (-12,\yy) {\Large \textbf{$i$-th party}};  

       \draw[red, thick] (-7.75,\yy-0.5) arc (220:320:1);
    \node[red] at (-9,\yy) {\Huge $b_i$};

    \draw[->, line width=0.8mm] ++(-6,\yy+1.5) -- (C1);
    \draw[->, line width=0.8mm] ++(-2,\yy+1.5) -- (C2);
    \draw[->, line width=0.8mm] ++(2,\yy+1.5) -- (C3);
    \draw[->, line width=0.8mm] ++(7,\yy+1.5) -- (C4);

    \node at (0,\yy) {\dots};
    \node at (5.5,\yy) {\dots};

    \draw[->, line width=0.8mm] (C1) -- ++(-1.5,-1.5);
    \draw[->, line width=0.8mm] (C1) -- ++(0,-1.5);
    \draw[->, line width=0.8mm] (C1) -- ++(1.5,-1.5);

    \draw[->, line width=0.8mm] (C2) -- ++(-1.5,-1.5);
    \draw[->, line width=0.8mm] (C2) -- ++(0,-1.5);
    \draw[->, line width=0.8mm] (C2) -- ++(1.5,-1.5);

    \draw[->, line width=0.8mm] (C3) -- ++(-1.5,-1.5);
    \draw[->, line width=0.8mm] (C3) -- ++(0,-1.5);
    \draw[->, line width=0.8mm] (C3) -- ++(1.5,-1.5);

    \draw[->, line width=0.8mm] (C4) -- ++(-1.5,-1.5);
    \draw[->, line width=0.8mm] (C4) -- ++(0,-1.5);
    \draw[->, line width=0.8mm] (C4) -- ++(1.5,-1.5);

    \draw[gray, ultra thick] (2.5,\yy-0.5) rectangle (3.5,\yy+0.5);

    \node at (0, \yy-2) {\vdots};

    \def\gap{-11}
    \node[circle, fill=black, inner sep=5pt] (F1) at (-10,\gap) {};
    \node[circle, fill=black, inner sep=5pt] (F2) at (-5.5,\gap) {};
    \node[circle, fill=black, inner sep=5pt] (F3) at (11.5,\gap) {};
    \node[left][scale=1.4] at (-12,\gap) {\Large \textbf{$k$-th party}};  

    \draw[->, line width=0.8mm] (-9,\gap+1.5) -- (F1);
    \draw[->, line width=0.8mm] (-5,\gap+1.5) -- (F2);
    \node at (3,\gap) {\dots};
    \draw[->, line width=0.8mm] (10.5,\gap+1.5) -- (F3);

    \def\xOffset{12}  
    \def\yOffset{\yy+5}  
    \def\squareSize{6.6}  
    \draw[gray, ultra thick] (\xOffset, \yOffset) rectangle (\xOffset+\squareSize, \yOffset-\squareSize);

    \node[circle, draw=black, line width=0.9mm, inner sep=6pt] (G) at (\xOffset+\squareSize/2, \yOffset-\squareSize/2) {\Large $G'_i$};

    \draw[->, line width=0.9mm] (\xOffset+\squareSize/2, \yOffset-0.5) -- (G);
    
    \node [scale = 1.3] at (\xOffset+\squareSize/2 + 0.4, \yOffset-0.6) {\Large $S$};

    \draw[->, line width=0.9mm] (G) -- ++(-1.5,-2.5) node[midway, left][scale=1.3] {\Large $X_1 \cup S$};
    \draw[->, line width=0.9mm] (G) -- ++(1.7,-2.5) node[midway, right][scale= 1.3] {\Large $X_{b_i} \cup S$};

    \draw[gray, thick] (\xOffset, \yOffset) -- (3.5,\yy+0.5);  
    \draw[gray, thick] (\xOffset, \yOffset-\squareSize) -- (3.5,\yy-0.5); 

\end{tikzpicture}

}

       \caption{\small The tree of communications generated by parties using algorithm \ref{algk}. The message $S$ sent through each edge to the $i$-th layer is a vertex cover of the subgraphs given to the first $i-1$ parties. For any of these messages the $i$-th party receives, she constructs $b_i$ different vertex covers of $G'_i=G_i[V\setminus S]$ and sends their union with $S$ to the next party. Each edge basically represents a subproblem party $i$ needs to solve. Given $S$, she will make $b_i$ guesses about the size of $\mvc{G'_i}$ and for any of these guesses, needs to solve the following subproblem: Find a vertex cover $X$ of $G'_i$ such that committing the subset $S\cup X$ to the final solution results in a good approximation ratio conditioned on the specific guess about $\mvc{G'_i}$. 
    }
    \label{fig:kparty}
\end{figure}
    
The second point to address is the following: given a guess on the
optimal vertex cover, how should we construct a vertex cover of $G_i'
= G_i[V \setminus S]$ in order to obtain the desired approximation
ratio?  We discuss this in detail in Section~\ref{sec:ojirjgioer} via
the formulation of a two-player game that captures the decision
process for party $i$.  We then formalize the protocol in
Section~\ref{sec:protocol} and establish its communication complexity
and approximation ratio.

\subsection{A Two-Player Game}\label{sec:ojirjgioer}
In this section, we will discuss the sub-problems each party needs to solve in the design of our protocol. In Figure~\ref{fig:kparty}, the construction of a message sent through any edge represents one of these sub-problems. Consider any subset of vertices $S$ party $i$ receives from the previous party. (For the first party, this is an empty set.),  and let $V'= V\setminus S$. Assuming that $S$ will be part of the final solution and any guess the party has about the size of the optimal solution on the remaining graph $G'=G[V']$, she needs to send a message to the next party. The message will be the union of $S$ and a vertex cover $X$ of her remaining subgraph $G'_i=G_i[V']$. 

To simplify the problem, we can ignore $S$ since it is part of both the input message and the outgoing message. Therefore, in the simplified version, party $i$ needs to solve the following problem. The input consists of a subgraph  $G'_i = (V', E'_i)$ and a guess about the size of the remaining optimal solution in the form of $\mvc{G'}\in (O, (1+\epsilon)O)$  where $O$ is an integer number. The goal of the party is to find a vertex cover $X$ of  $G'_i$ in order to minimize the size of the final solution if subset $X$ is committed to the final solution knowing that the rest of the parties will implement a protocol with an expected approximation ratio of at most $2-2^{i-k}+5\epsilon$ on the remaining  subgraph $G'[V'\setminus X]$.

We can view these sub-problems as two-party problems between party $i$, which we will refer to as Alice, and a second party, Bob, who encapsulates parties $i+1$ to $k$ in the original problem. In this problem, Bob, instead of being able to find an exact MVC of the remaining graph, can only find a solution with an approximation ratio of at most $2-2^{i-k}+5\epsilon$. In this two-party problem, we use $G_A = (V', E_A)$ to refer to $G'$ and $G_B = (V', E_B)$ as the union of the subgraphs given to parties $i+1$ to $k$ in the original problem. Hence, Alice's goal here is to pick a randomized $X$ such that it results in an expected approximation of at most $ 2-2^{i-k-1}+5\epsilon$ for  any priori fixed $G_B$. 

We reformulate this two-party problem as a two-player game. The first player is Alice, and the second player is an adversary who will determine $G_B$. 

\begin{definition}[MVC Game] \label{def:Game} Given three parameters \(\beta \in (0,1)\), $\epsilon\in (0,1)$ and $O\in (0,n)$ we define a two-player zero-sum game between Alice and an adversary on a graph with vertex set \(V'\). Alice has a set of edges \(E_A\) between vertices \(V'\), which is known to both players. Alice's strategy is to select a subset \(X \subseteq V'\) of vertices that covers all edges in \(E_A\). The adversary's strategy is to select a set of edges \(E_B\) over \(V'\) satisfying \[
O\leq |\mathrm{MVC}(E_B \cup E_A)|\leq (1+\epsilon)O.\] The adversary's utility is defined as:
\begin{equation} \label{utility}
    U_B(X, E_B) = \frac{|X| + (2-\beta) \times |\mathrm{MVC}(E_B[V \setminus X])|}{|\mathrm{MVC}(E_B \cup E_A)|}.
\end{equation}
The notation \(E_B[V \setminus X]\) represents the edges in \(E_B\) that are induced by the vertices in the set \(V \setminus X\), meaning it includes only those edges in \(E_B\) where both endpoints are in \(V \setminus X\). 
\end{definition}

We begin by proving that for any given distribution over adversary's
strategies, there exists a deterministic strategy for Alice that
achieves a payoff of at least $(2-\beta/2)(1+\epsilon)$.

\begin{lemma}\label{lemma:pure}
    In the MVC game presented in Definition~\ref{def:Game}, given any
    randomized (mixed) strategy $E_B$ of the adversary, Alice has a
    deterministic (pure) strategy $X$ such that \[
    \E[U_B(X, E_B)]\leq
    (2-\beta/2)(1+\epsilon).\]
\end{lemma}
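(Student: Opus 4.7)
The plan is to exhibit a concrete pure strategy for Alice---the minimum weighted vertex cover of $G_A$ under a specific vertex weighting induced by the adversary's mixed strategy---and then bound its expected utility by combining a routine ``MVC-of-a-subgraph'' inequality with the technical weight bound whose general form is Lemma~\ref{claim:vertex cover weight}. Concretely, fix the adversary's distribution $\mathcal{D}_B$ over $E_B$ and, for every $v\in V'$, define the marginal $c_v := \Pr_{E_B\sim\mathcal{D}_B}\bra{v\in \MVC(E_A\cup E_B)}$. Assign vertex weight $w_v := 1 - (2-\beta)c_v$, let $X\subseteq V'$ be a minimum-weight vertex cover of $G_A=(V',E_A)$ under these weights, and write $W(X):=\sum_{v\in X}w_v$. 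This is Alice's pure strategy; it is feasible for $E_A$ by construction.

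To analyze $\E[U_B(X, E_B)]$, observe first that for every realization of $E_B$, the set $\MVC(E_A\cup E_B)\setminus X$ is a feasible (though possibly suboptimal) vertex cover of $E_B[V\setminus X]$, since it covers every edge of $E_A\cup E_B$ whose endpoints both lie outside $X$. Taking expectations over $\mathcal{D}_B$ gives
\[
\E\bra{\abs{\MVC(E_B[V\setminus X])}} \;\leq\; \E\bra{\abs{\MVC(E_A\cup E_B)\setminus X}} \;=\; \sum_{v\notin X} c_v.
\]
Using this together with the identity $\sum_v c_v = \E[|\MVC(E_A\cup E_B)|]$ and the definition of $w_v$, an elementary algebraic rewrite yields
\[
|X| + (2-\beta)\,\E\bra{\abs{\MVC(E_B[V\setminus X])}} \;\leq\; W(X) + (2-\beta)\,\E\bra{\abs{\MVC(E_A\cup E_B)}}.
\]

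The next step is to invoke Lemma~\ref{claim:vertex cover weight} at parameter $\beta$ to conclude that $W(X)\leq (\beta/2)\,\E[|\MVC(E_A\cup E_B)|]$. Plugging this into the previous display collapses the right-hand side to $(2-\beta/2)\,\E[|\MVC(E_A\cup E_B)|]$. The adversary's feasibility constraint guarantees that $|\MVC(E_A\cup E_B)|\in[O,(1+\epsilon)O]$ for every realization, so the denominator of $U_B$ is always at least $O$, which allows us to pull the expectation inside the ratio:
\[
\E[U_B(X, E_B)] \;\leq\; \frac{|X| + (2-\beta)\,\E\bra{\abs{\MVC(E_B[V\setminus X])}}}{O} \;\leq\; \frac{(2-\beta/2)\,\E\bra{\abs{\MVC(E_A\cup E_B)}}}{O} \;\leq\; (2-\beta/2)(1+\epsilon),
\]
which is exactly the claimed bound.

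The only step in this pipeline that is not routine is the weight bound $W(X)\leq(\beta/2)\,\E[|\MVC(E_A\cup E_B)|]$, and that is the main obstacle. This is the technical heart of the paper and is precisely Lemma~\ref{claim:vertex cover weight}, proven in Section~\ref{sec:technical} by a probabilistic construction of a randomized cover of $G_A$ extending the stochastic-MVC argument of~\cite{DBLP:conf/stoc/DerakhshanDH23}. Once that lemma is in hand, the proof of Lemma~\ref{lemma:pure} reduces to the algebraic pipeline sketched above, directly generalizing the $\beta=1$ instance that was informally described for two parties in Section~\ref{sec:warmup}.
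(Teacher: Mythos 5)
Your proposal is correct and follows essentially the same approach as the paper: the same weight assignment $w_v = 1-(2-\beta)c_v$, the same choice of $X$ as a minimum-weight vertex cover of $E_A$, the same bound $\E\bigl[|\MVC(E_B[V\setminus X])|\bigr]\le\sum_{v\notin X}c_v$ via the restriction of $\MVC(E_A\cup E_B)$ to $V\setminus X$, the same invocation of Lemma~\ref{claim:vertex cover weight}, and the same final division by the lower bound $O$ on the denominator. The only differences are cosmetic---you present the intermediate inequality in the form $|X|+(2-\beta)\E[\cdot]\le W(X)+(2-\beta)\E[|\MVC(E_A\cup E_B)|]$ rather than expanding the sum of weights term by term as the paper does.
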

\begin{proof}
For any vertex $v$ let us define $c_v=\Pr[v\in \MVC(E_B\cup E_A)]$.   Alice chooses a minimum weight vertex cover $X$ of $E_A$, where the weight of a vertex is defined as $w_e= 1-(2-\beta)c_v$. We will prove that this choice of $X$ satisfies the statement of the lemma. We have 
\begin{align}
 \nonumber  \E\big[|X|+ (2-\beta)\big|\MVC(E_B[V \setminus X])\big|\big] & \leq \left( \sum_{v\in X} 1 \right)+   (2-\beta)\sum_{v\in V \setminus X} c_v  \\  \nonumber &= \left (\sum_{v\in X} (1- (2-\beta) c_v +  (2-\beta) c_v) \right) + (2-\beta)\sum_{v\in V \setminus X} c_v \\& = \left ( \sum_{v\in X} (1- (2-\beta) c_v)  \right )+ (2-\beta) \sum_{v\in V} c_v \label{eq:jegjregko}
\end{align}
In the first inequality, we use the fact that 
\begin{equation}\label{eq:jefj}
    \E[|\MVC(E_B[V \setminus X])|] \leq \sum_{v\in V \setminus X} c_v.
\end{equation} 
This is due to the definition of $c_v$s, which implies there is a vertex cover of $E_B\cup E_A$, containing any vertex $v\in V \setminus X$ w.p. $c_v$. Since edges in  $E_B[V \setminus X]$ can only be covered by vertices in $V \setminus X$, this implies that there is also a vertex cover of $E_B[V \setminus X]$ which contains each vertex w.p. $c_v$ hence by  linearity of expectation, we get \eqref{eq:jefj}.

 A technical part of our proof is to show $\sum_{v\in X} (1-
(2-\beta) c_v)\leq \frac{\beta}{2}\sum_{v\in V} c_v$ which we defer to
Lemma~\ref{claim:vertex cover weight} in
Section~\ref{sec:technical}. Combining this with \eqref{eq:jegjregko}
gives us
\begin{align*}
   \E[|X|+ (2-\beta)|\MVC(E_B[V \setminus X])|] &\leq   \sum_{v\in X} (1- (2-\beta) c_v) + (2-\beta) \sum_{v\in V} c_v \\ &\leq \frac{\beta}{2}\sum_{v\in V} c_v + (2-\beta) \sum_{v\in V} c_v\\ &\leq (2-\beta/2) \sum_{v\in V} c_v \\&\leq  (2-\beta/2) \E[\MVC(E_A \cup E_B)] \leq (2-\beta/2)(1+\epsilon)O.
\end{align*}
Hence, we get
\vspace{2 mm}
\begin{align*}
    \E\left[\frac{|X|+ (2-\beta)|\MVC(E_B[V \setminus X])|}{|\MVC(E_A \cup E_A)|}\right] &\leq \frac{\E\left[|X|+ (2-\beta)|\MVC(E_B[V \setminus X])|\right]}{O} \\&\leq 
    \frac{(2-\beta/2)(1+\epsilon)O}{O}\leq (2-\beta/2)(1+\epsilon),
\end{align*}
completing the proof of the lemma.
\end{proof}

Combining Lemma~\ref{lemma:pure} with von Neumann's Minimax Theorem yields
the existence of a randomized strategy for Alice that achieves an
expected payoff of at least $(2-\beta/2)(1+\epsilon)$.

\begin{lemma} \label{lemma:mixed}
In the MVC game presented in  Definition~\ref{def:Game}, Alice has a randomized strategy $X$ such that no strategy $E_B$ played by the adversary obtains utility larger than $(2-\beta/2)(1+\epsilon)$.
\end{lemma}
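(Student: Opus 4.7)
The plan is to derive Lemma~\ref{lemma:mixed} directly from Lemma~\ref{lemma:pure} by applying von Neumann's Minimax Theorem to the MVC game of Definition~\ref{def:Game}. The first step is to observe that this is a finite two-player zero-sum game: Alice's pure strategies are the finitely many vertex covers $X \subseteq V'$ of $E_A$, and the adversary's pure strategies are the finitely many edge sets $E_B$ over $V'$ satisfying $O \leq |\MVC(E_B \cup E_A)| \leq (1+\epsilon)O$. The constraint ensures that the denominator of $U_B(X, E_B)$ is at least $O \geq 1$, so the utility is a well-defined real number on every pair of pure strategies, and the elementary finite form of minimax applies without any compactness or convexity hypotheses.

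Next, I would write down the minimax identity. Letting $\sigma_A$ and $\sigma_B$ range over mixed strategies of Alice and the adversary respectively, von Neumann's theorem gives
\[
\min_{\sigma_A}\,\max_{E_B}\,\E_{X \sim \sigma_A}\bra{U_B(X, E_B)} \;=\; \max_{\sigma_B}\,\min_{X}\,\E_{E_B \sim \sigma_B}\bra{U_B(X, E_B)},
\]
where linearity of expectation justifies restricting the outer maximum to pure $E_B$ on the left and the inner minimum to pure $X$ on the right. Lemma~\ref{lemma:pure} says that for every $\sigma_B$ there is some pure $X$ with $\E_{E_B \sim \sigma_B}[U_B(X, E_B)] \leq (2-\beta/2)(1+\epsilon)$, so the right-hand side of the identity is bounded by $(2-\beta/2)(1+\epsilon)$. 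By the equality, the left-hand side satisfies the same bound, which is exactly the existence of a mixed strategy $\sigma_A$ for Alice with $\E_{X \sim \sigma_A}[U_B(X, E_B)] \leq (2-\beta/2)(1+\epsilon)$ for every adversary response $E_B$, as required.

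Given Lemma~\ref{lemma:pure}, this is essentially a textbook invocation of the minimax theorem and I do not anticipate any real obstacle; the proof is a two-line argument once that lemma is in hand. The same template already appears in the paper's discussion of the two-party warm-up (Proposition~\ref{prop:kjpiwqe9ji}), and the general-$\beta$ version needed here follows by exactly the same reasoning, with the only new input being the quantitative bound supplied by Lemma~\ref{lemma:pure}.
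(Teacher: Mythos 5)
Your proof is correct and takes essentially the same approach as the paper: both invoke von Neumann's Minimax Theorem to convert the pure-strategy guarantee of Lemma~\ref{lemma:pure} against an arbitrary mixed adversary into a mixed strategy for Alice that bounds the adversary's utility against every response. You spell out the minimax identity and the finiteness of the game a bit more explicitly than the paper's terse invocation, but the argument is the same.
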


\begin{proof}
Given any parameter $\alpha > 0$, von Neumann's Minimax Theorem~\cite{vonNeumann:1928:TGG} (alternatively, Yao's minimax principle~\cite{DBLP:conf/focs/Yao77}) implies that in the described game, if for any mixed
strategy chosen by the adversary, there exists a pure strategy \( X \)
for Alice such that the adversary's expected utility is at most
$\alpha$, then there exists a mixed strategy for Alice that ensures
the expected utility of the adversary is at most $\alpha$ for any
strategy the adversary employs.  By Lemma~\ref{lemma:pure}, for any
mixed strategy of the adversary, there exists a pure strategy for
Alice that achieves an expected utility of at least
$(2-\beta/2)(1+\epsilon)$.  This implies the existence of a mixed
strategy for Alice with the same expected utility against an arbitrary
adversary.
\end{proof}

\subsection{The Protocol}
\label{sec:protocol} In this section, we provide a formal statement of our protocol in  Algorithm~\ref{algk}.
We then establish in Lemma~\ref{lemma:communication} that the total communication complexity is $O(n)$
for any constant $k$. Finally, we prove in Lemma~\ref{lemma:approximation} that it results in our desired approximation ratio. 
\RestyleAlgo{boxruled} \LinesNumbered
\begin{algorithm}[ht]
\caption{$k$-Party Algorithm for party $i$\label{algk}}
    \textbf{Input:} subgraph $G_i$ and $M_{i-1}$ (message from party $i-1$ and $M_0 = \{ \emptyset \}$.)\\
     Let $\epsilon\in (0, 1/5)$ be a given constant number.\\
    \For{$S \in M_{i-1}$}{
    $G_i' = G_i[V \backslash S]$ (Remove all the vertices in $S$ from the graph.)\\
        \If{$i < k$}{
        \For{$l= 1$ to $b_i$}{
        $O \gets (1 + \epsilon)^{l-1} \mvc{G_i'}$. \\
        $\beta \gets 2^{-k+i} - 5\epsilon$.\\
        Apply the strategy of Lemma \ref{lemma:mixed} with $O$ and $\beta$ to get $X_l$.\\
        Add $X_l \cup S$ to $M_i$.
        }}
        Add $\MVC(G_i') \cup S$ to $M_i$.
    }
    \eIf{$i = k$}{
    \text{\Return the set $\out_i$ with minimum size in $M_i$.}
    }
    {send $M_i$ to party $i+1$}
    
\end{algorithm}

As mentioned before, parts of our protocol are non-explicit. In particular, we only show the existence of suitable subsets $X_1, \dots, X_{b_i}$.  In our protocol, each party $i$ communicates a set of suitable vertex covers of the graph given to the first $i$ parties. We do not present an efficient procedure for computing these covers, which makes our protocol non-explicit.

\begin{lemma}
  \label{lemma:communication}
Given a graph $G = (V, E)$ and $k$ parties, if all the parties use algorithm \ref{algk} to communicate and find the vertex cover of $G$, the communication cost will be 
    $(k-1)! (\log_{1 + \epsilon}2)^{k-1} O(n)$.
\end{lemma}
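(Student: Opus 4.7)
The plan is to directly track how the number of vertex covers in the message $M_i$ grows with $i$, and then multiply by the encoding cost of a single vertex cover. Crucially, each element of $M_i$ is a subset of $V$ and can be encoded using exactly $n$ bits (a characteristic vector), which is what gives the final $O(n)$ factor rather than an $O(n \log n)$ factor. I expect no significant obstacle; the argument is essentially an exercise in bookkeeping, with the only delicate point being to separate the $n$-bit cost of a vertex cover from the multiplicative combinatorial factor coming from $b_i$.

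First, I would read off from Algorithm \ref{algk} the recursion $|M_i| = (b_i + 1)\cdot |M_{i-1}|$ for $1 \le i < k$: for each $S \in M_{i-1}$, the party adds $b_i$ sets of the form $X_l \cup S$ (indexed by $l = 1,\dots,b_i$) together with one additional set $\MVC(G_i') \cup S$. With $|M_0| = 1$, this unrolls to
\begin{equation*}
    |M_i| \;=\; \prod_{j=1}^{i} (b_j + 1).
\end{equation*}
Since each element of $M_i$ is a subset of $V$, it is representable in $n$ bits, so the message that party $i$ sends to party $i+1$ has length at most $n\cdot |M_i|$. The total communication cost is therefore bounded, up to a factor of $k$, by $n\cdot |M_{k-1}| = n\cdot \prod_{j=1}^{k-1} (b_j+1)$.

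Second, I would substitute the definition $b_i = \log_{1+\epsilon} 2^{\,k-i} + 1 = (k-i)\log_{1+\epsilon}2 + 1$. For sufficiently small $\epsilon$, $\log_{1+\epsilon} 2 \ge 1$, so $b_i + 1 \le 2(k-i)\log_{1+\epsilon} 2$. Hence
\begin{equation*}
    \prod_{i=1}^{k-1}(b_i+1) \;\le\; \Bigl(2\log_{1+\epsilon}2\Bigr)^{k-1} \prod_{i=1}^{k-1}(k-i) \;=\; 2^{k-1}(k-1)!\,(\log_{1+\epsilon}2)^{k-1}.
\end{equation*}
Multiplying by the $n$-bit per-cover cost and absorbing the $2^{k-1}$ and the outer $k$ into the $O(\cdot)$, the total communication is $(k-1)!\,(\log_{1+\epsilon}2)^{k-1}\,O(n)$, as stated. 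The only place I would be careful is to verify that all objects communicated throughout Algorithm \ref{algk} are indeed subsets of $V$ (not edge sets or weighted objects), so that the $n$-bit encoding per element is correct; this is immediate from inspection of the algorithm, since every set added to any $M_i$ is the union of a subset $S \subseteq V$ inherited from $M_{i-1}$ with either some $X_l \subseteq V$ or $\MVC(G_i') \subseteq V$.
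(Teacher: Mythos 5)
Your proposal is correct and takes essentially the same route as the paper's proof: track how the number of subsets in $M_i$ multiplies along the chain and charge $n$ bits per subset. You are in fact a bit more careful than the paper, which uses the recursion $|M_i| = b_i|M_{i-1}|$ (overlooking both the extra ``Add $\MVC(G_i')\cup S$'' line and the $+1$ in the definition of $b_i$), whereas you track $|M_i|=(b_i+1)|M_{i-1}|$ and absorb the resulting $2^{k-1}$ into the $O(\cdot)$. One small nit: the inequality $b_i+1\le 2(k-i)\log_{1+\epsilon}2$ does not follow from $\log_{1+\epsilon}2\ge 1$ alone when $k-i=1$; you need $\log_{1+\epsilon}2\ge 2$, which does hold since the algorithm assumes $\epsilon<1/5$.
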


\begin{proof}
    Since each party sends $b_i$ new subsets for each subset they receive, the number of subsets communicated between the last two parties is 
    \[b_1 b_2 ... b_{k-1} = \log_{1 + \epsilon}2^{k-1} \cdot \log_{1 + \epsilon} 2^{k-2} \cdots \log_{1 + \epsilon} 2^{1} = (k-1)! (\log_{1 + \epsilon} 2)^{k-1}\]
    Moreover, since each subset of vertices can be represented using an $n$-bit string, the communication cost is $(k-1)! (\log_{1 + \epsilon}2)^{k-1} O(n) = O_{\epsilon, k} (n)$.
\end{proof}

We next establish an upper bound on the approximation
ratio achieved by the protocol.

\begin{lemma}
\label{lemma:approximation}
  In the $k$-party minimum vertex cover problem, 
let $\pi_{k}(G)$ be the size of the output if all the parties use algorithm \ref{algk} on the base graph $G$.
     Then \[\E [\pi_k(G) ] \leq (2 - \frac{1}{2^{k-1}} + 5 \epsilon) \mvc{G}\]
\end{lemma}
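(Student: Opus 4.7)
The plan is to prove by backward induction on $i$ (within the fixed $k$-party protocol) the stronger statement that for every $i \in \{1,\dots,k\}$ and every vertex set $S$ received as input by party $i$, the expected size of the vertex cover produced by parties $i,\dots,k$ starting from $S$ is at most $|S| + (2 - 2^{-(k-i)} + 5\epsilon)\mvc{G[V\setminus S]}$. The lemma is then the instance $i=1$, $S=\emptyset$. The base case $i=k$ is immediate: party $k$ outputs $S \cup \MVC(G_k[V\setminus S])$, of total size $|S| + \mvc{G[V\setminus S]} \leq |S| + (2 - 2^{0} + 5\epsilon)\mvc{G[V\setminus S]}$.

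For the inductive step at $i < k$, the key observation is that because the final party returns the smallest element of $M_k$, it suffices to exhibit a \emph{single} (random) element $Y^\star \in M_k$ extending $S$ whose expected size meets the target. I will construct $Y^\star$ by descending the tree of Figure~\ref{fig:kparty}: at each visited node (party $i$ with committed set $S$), writing $G' = G[V\setminus S]$ and $G'_i = G_i[V\setminus S]$, I select the branch corresponding to the unique $l \in \{1,\dots,b_i\}$ with $(1+\epsilon)^{l-1}\mvc{G'_i} \leq \mvc{G'} < (1+\epsilon)^l\mvc{G'_i}$ if such an $l$ exists (the \emph{typical} case), and otherwise---when necessarily $\mvc{G'} \geq (1+\epsilon)^{b_i}\mvc{G'_i} \geq 2^{k-i}\mvc{G'_i}$---I take the branch appending $\MVC(G'_i)$ (the \emph{exceptional} case).

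In the typical case, Lemma~\ref{lemma:mixed} applied with $O = (1+\epsilon)^{l-1}\mvc{G'_i}$, with $E_A$ the edges of $G'_i$, with $E_B$ the edges held by parties $i+1,\dots,k$ whose both endpoints lie in $V\setminus S$, and with the algorithm's prescribed $\beta$, produces a random $X_l$ satisfying
\[
\E\bigl[|X_l| + (2-\beta)\,|\MVC(E_B[V\setminus X_l])|\bigr] \;\leq\; (2-\beta/2)(1+\epsilon)\,\mvc{G'}.
\]
Because $S$ covers all edges of parties $1,\dots,i-1$ and $X_l$ covers $E_A$, the edge set $E_B[V\setminus X_l]$ coincides with the edge set of $G[V\setminus(S\cup X_l)]$, so $|\MVC(E_B[V\setminus X_l])| = \mvc{G[V\setminus(S\cup X_l)]}$. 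Conditioning on $X_l$, the induction hypothesis applied to parties $i+1,\dots,k$ starting from committed set $S\cup X_l$ bounds the expected contribution of those parties by $(2 - 2^{-(k-i-1)} + 5\epsilon)\mvc{G[V\setminus(S\cup X_l)]}$; calibrating $\beta$ so that $2-\beta$ matches this inductive factor makes it exactly the coefficient of $|\MVC(E_B[V\setminus X_l])|$ appearing in the game utility above. Taking expectation over $X_l$ then gives the per-level bound $|S| + (2-\beta/2)(1+\epsilon)\mvc{G'}$, which a short elementary calculation shows is at most $|S| + (2 - 2^{-(k-i)} + 5\epsilon)\mvc{G'}$ for sufficiently small $\epsilon$.

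The exceptional case is simpler: $\mvc{G'_i} \leq 2^{-(k-i)}\mvc{G'}$ by assumption, so committing $\MVC(G'_i)$ adds at most $2^{-(k-i)}\mvc{G'}$ to the output, and the induction hypothesis applied to parties $i+1,\dots,k$ on $G[V\setminus(S\cup\MVC(G'_i))]$ (whose MVC is at most $\mvc{G'}$) supplies a $(2 - 2^{-(k-i-1)} + 5\epsilon)$-approximation; these two contributions combine to stay within the target $(2 - 2^{-(k-i)} + 5\epsilon)\mvc{G'}$. The principal technical obstacle will be in the typical case: verifying that the prescribed $\beta$ calibrates $(2-\beta)$ exactly to the $(k-i-1)$-party inductive ratio, and that the multiplicative $(1+\epsilon)$ slack accumulated from the $k-1$ applications of Lemma~\ref{lemma:mixed}, together with the $\epsilon$ losses from the geometric discretization of the guesses, collectively fit inside the single additive $5\epsilon$ term in the statement.
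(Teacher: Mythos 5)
Your plan is essentially the same as the paper's: backward induction on the party index $i$ is equivalent to the paper's forward induction on the number of parties $k$, with the same typical/exceptional case split and the same invocation of Lemma~\ref{lemma:mixed} at each level of the tree. The one detail you correctly flag for verification — the calibration of $\beta$ — is indeed delicate: you need $\beta = 2^{-(k-i-1)} - 5\epsilon$ so that $2-\beta$ equals the $(k-i)$-party inductive ratio, whereas Algorithm~\ref{algk} assigns $\beta \gets 2^{-k+i} - 5\epsilon$, which is off by a factor of two in the exponent; this appears to be a typo in the paper, since equation~\eqref{eq:1234} and the arithmetic following equation~\eqref{eq:iurhwodi} implicitly use the value you would derive.
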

\begin{proof}
    We will use induction on the number of parties.
    If $k=1$, then the algorithm will output the set with the minimum size in $M_1$. Since $M_0 = \{ \emptyset \}$, a set in $M_1$ will be $\mvc{G'= G[V \backslash \emptyset] = G}$ that is added on line 11. Therefore, the output is $\mvc{G}$ which gives us
    \[\E[\pi_{1}(G)] \leq \mvc{G} \leq (2 - \frac{1}{2^{1-1}} + 5 \epsilon) \mvc{G} = (1 + 5\epsilon)\mvc{G},\] proving base case for $k=1$.
     For the inductive step, assume that the lemma statement holds for $k-1$ parties and any base graph $G'$. That is 
    \begin{equation}
    \label{eq:inductive}
       \E[ \pi_{k-1}(G')] \leq (2 - \frac{1}{2^{k-2}} + 5 \epsilon) \mvc{G'}.
    \end{equation}
    In line 6 of the algorithm \ref{algk}, let $X_{1}, X_{2},..., X_{b_1}$ be the subsets generated by the first party using the strategy of Lemma \ref{lemma:mixed} with $O$ and $\beta$ of line 9, where $b_1 = \log_{1 + \epsilon} 2^{k-1}$.
    Let $j$ be the integer number satisfying $ (1 + \epsilon)^{j-1} \mvc{G_1} \leq \mvc{G} \leq (1 + \epsilon)^{j} \mvc{G_1}$. 

    \textbf{Case 1}: $j \leq b_1 $.
    Party $2$ upon receiving $X_{j}$, constructs a graph $G' = G[V \backslash X_{j}]$ since $X_{j}$ is committed to be in the final solution and we do not need to consider the edges of these vertices. We find a vertex cover on $G'$ where the partitions are $G'_l = G_l[V \backslash X_{j}]$ for  $2 \leq l \leq k$. Since the lemma statement holds for $k-1$ parties, applying algorithm \ref{algk} on a graph $G'$ gives us a vertex cover of $G'$ with the expected approximation ratio of \[(2 - \frac{1}{2^{(k-1)-1}} + 5 \epsilon).\]
    Since $\out_k$ covers all the edges in $G[V \backslash X_{j}]$, a vertex cover for $G$ is $\out_k \cup X_{j}$.

    As a result of this, the final solution will have size $\pi_{k}(G) = |X_{j}| + \pi_{k-1}(G')$ and results in the following inequality for the approximation ratio.
    \begin{equation}
    \label{eq:eq2}
        \frac{\E[\pi_{k}(G)]}{\mvc{G}} = \frac{\E[|X_{j}|] + \E[\pi_{k-1}(G')]}{\mvc{E_A \cup E_B}} \leq \frac{|X_{j}| + (2 - \frac{1}{2^{k-1}} + 5\epsilon) \mvc{G'}}{\mvc{E_A \cup E_B}}
    \end{equation}

    Now, we construct a version of the MVC game in which we have Alice as the first party and
    the adversary as the union of the next $k-1$ parties. Let $\beta = \frac{1}{2^{k-1}} - 5 \epsilon$, $O = (1 + \epsilon)^{j-1} \mvc{G_1}$ and $\epsilon = \frac{1}{5}$.
     As described in the game definition, Alice sends a vertex cover to Bob and aims to minimize the utility function \ref{eq:1234}. The adversary (the next $k-1$ parties) aims to choose edges $E_B$ in a way to maximize his utility. Let $G_B = (V, E_B = \bigcup_{l= 2}^{k} E_l)$. The adversary's strategy is to select a set of edges $E_B$ over $V$ s.t. $O \leq |\mvc{E_B \cup E_A}| \leq (1 + \epsilon) O$. 
     \begin{equation}
     \label{eq:1234}
         \E[U_B(X_{j} , E_B)] = \frac{|X_{j}| + (2 - \beta)\mvc{E_B[V \backslash X_{j}]}}{\mvc{E_B \cup E_A}} = \frac{|X_{j}| + (2 - \frac{1}{2^{k-2}} + 5\epsilon)\mvc{G'}}{\mvc{E_B \cup E_A}}
     \end{equation}

    Now using Lemma~\ref{lemma:mixed}, Alice has a strategy such that
    no strategy played by the adversary obtains utility larger than
    $(2 - \beta/2)(1 + \epsilon)$.
    \begin{equation}\label{eq:iurhwodi}
        \frac{\E[\pi_k(G)]}{\mvc{G}}  \leq \E [U_B(X_{j}, E_B)]  \leq (2 - \beta/2)(1 + \epsilon)
    \end{equation}
   
    By putting $\beta$ in Equation~\ref{eq:iurhwodi} we get
  
    \begin{align*}
        \E [U_B(X_{j}, E_B)] & \leq (2 - \beta/2)(1 + \epsilon) = (2 - \frac{1}{2^{k-1}} + 5/2 \epsilon)(1 + \epsilon) \\
        & = 2 - \frac{1}{2^{k-1}} + 5/2 \epsilon  + 2\epsilon - \frac{1}{2^{k-1}} \epsilon + 5/2 \epsilon^2 = 2 - \frac{1}{2^{k-1}} + \epsilon [9/2 - \frac{1}{2^{k-1}} + 5/2 \epsilon]. 
    \end{align*} 
    Since $\epsilon \leq \frac{1}{5}$, $9/2 -\frac{1}{2^{k-1}} + 5/2 \epsilon \leq 5.$   The following completes our inductive step for case 1.
    \[\frac{\pi_k(G)}{\mvc{G}} \leq (2 - \beta/2)(1 + \epsilon) \leq 2 - \frac{1}{2^{k-1}} + 5 \epsilon.\]

    \textbf{Case 2:} $j > b_i$. So, $2^{k-1} \mvc{G_1} < \mvc{G}$. Since $\mvc{G_1}$ is a subset in $M_{1}$, party 2 constructs $G' = G[V \backslash MVC(G_1)]$.  Since $\mvc{G'} \leq \mvc{G}$, by Equation~\ref{eq:inductive}, we get \[\E[\pi_{k-1}(G')] \leq (2 - \frac{1}{2^{k-2}} + 5\epsilon) \mvc{G'} \leq (2 - \frac{1}{2^{k-2}} + 5\epsilon) \mvc{G}.\]
    As explained in the previous case, the output is at least the summation of the minimum vertex cover of $G_1$ and the algorithm's output on the rest of the partitions.
    \[\E[\pi_{k}(G)] \leq \mvc{G_1} + \pi_{k-1}(G') \leq \frac{1}{2^{k-1}} \mvc{G} + (2 - \frac{1}{2^{k-2}} + 5\epsilon) \mvc{G} \leq (2 - \frac{1}{2^{k-1}} + 5 \epsilon ) \mvc{G}.\]
   This completes the proof for case 2 which completes the proof.
\end{proof}

The upper bound of Theorem~\ref{thm:upper-mvc} follows from
Lemmas~\ref{lemma:communication} and~\ref{lemma:approximation}.

\subsection{A Technical Lemma}
\label{sec:technical}
In this section, we establish the following lemma, which is used in
the proof of Lemma~\ref{lemma:pure}.
\begin{lemma}
  \label{claim:vertex cover weight}
$\sum_{v\in X} (1- (2-\beta) c_v)\leq \frac{\beta}{2}\sum_{v\in V} c_v.$
\end{lemma}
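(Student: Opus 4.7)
The plan is to use the probabilistic method: exhibit a randomized vertex cover $Y$ of $E_A$ whose expected weight (under $w_v := 1-(2-\beta)c_v$) is at most $\tfrac{\beta}{2}\sum_v c_v$; the lemma then follows because $\sum_{v\in X}w_v \le \E[w(Y)]$ for the minimum-weight vertex cover $X$. Following the construction outlined in the overview, I would fix the threshold $t := 2/(4-\beta) \in (1/2, 2/3]$ (chosen to make the per-vertex bound tight on the high-$c$ group) and partition the vertex set into $H := \{v : c_v > t\}$, $M := \{v : c_v \in (1-t,t]\}$, and $L := \{v : c_v \leq 1-t\}$.

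The decisive structural observation is that every $(u,v) \in E_A$ is deterministically present in $E_A \cup E_B$, so the random $R := \MVC(E_A \cup E_B)$ always covers it, and by a union bound $c_u + c_v \geq 1$. Thus $c$ is an LP-feasible fractional vertex cover of $E_A$, and in particular any $L$-vertex's $E_A$-neighbors all lie in $H$. With $E_B \sim \mathcal{D}_B$ I would set $Y := H \cup (R \cap M)$. A brief case analysis of edges (within $H$, within $M$, and $L$-to-$H$) shows $Y$ is a valid vertex cover of $E_A$, and linearity of expectation gives
\[
\E[w(Y)] \;=\; \sum_{v \in H}\bigl(1 - (2-\beta)c_v\bigr) \;+\; \sum_{v \in M} c_v\bigl(1 - (2-\beta)c_v\bigr).
\]

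It remains to bound this sum by $\tfrac{\beta}{2}\sum_v c_v$, and this is where I expect the main difficulty. The choice $t=2/(4-\beta)$ makes the pointwise inequality $w_v \le \tfrac{\beta}{2}c_v$ hold for every $v \in H$, and for $v \in M$ with $c_v \ge 1/2$ one gets $c_v w_v \le \tfrac{\beta}{2}c_v$ directly; moreover each vertex in $L$ contributes $0$ on the left but $\tfrac{\beta}{2}c_v$ of slack on the right. The only positive contribution to $\E[w(Y)] - \tfrac{\beta}{2}\sum_v c_v$ therefore comes from $M^- := M \cap \{c_v < 1/2\}$. Here the edge constraint $c_u + c_v \ge 1$ forces every $u \in M^-$ to have all of its $E_A$-neighbors in $M^+ \cup H$, and a short computation shows that on any such edge $(u,v)$ the per-vertex surplus at $v$ dominates the per-vertex deficit at $u$: after simplification the inequality reduces to $c_u^2 + c_v^2 \ge (c_u + c_v)/2$, which is immediate from $2(c_u^2 + c_v^2) \ge (c_u + c_v)^2 \ge c_u + c_v$. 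To aggregate these per-edge inequalities into the required global bound, I would use LP duality / a fractional matching argument on the bipartite subgraph of $E_A$ between $M^-$ and $M^+ \cup H$, invoking the fractional-vertex-cover feasibility of $c$ to certify the Hall-type capacity condition, and drawing extra slack from $L$-vertices and from $M$-vertices isolated in $E_A$ when the surplus in $M^+ \cup H$ alone does not suffice. Extending the stochastic-vertex-cover analysis of Derakhshan et al.\ to this weighted thresholded setting is the technical heart of the proof and the natural site of difficulty.
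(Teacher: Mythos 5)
Your overall architecture matches the paper's: compare $X$ to a randomized cover $Y$ built from a threshold partition, exploit $c_u + c_v \ge 1$ on $E_A$-edges, and handle the regions separately; the pointwise bounds you give for $H$, $L$, and $\{c_v \ge 1/2\}\cap M$ are all correct. But the plan diverges at the two places that carry the proof, and that divergence leaves a gap you cannot close with the tools you describe. The paper does \emph{not} use a fixed threshold: it chooses $t$ to be the \emph{smallest} $t\in[1/2,1]$ with $\sum_{v:c_v>t}c_v\le\sum_{v:c_v<1-t}c_v$, and this data-dependent choice is what simultaneously yields $\sum_{S_3}c_v\le\sum_{S_1}c_v$ and the balance hypothesis of Lemma~\ref{lem:middle region}, a purely global statement about the empirical distribution of $c_v$ over the middle region that never looks at $E_A$. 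Your fixed $t=2/(4-\beta)$ gives no control on how much mass lands in $M^-=\{c_v\in(1-t,1/2)\}$ relative to the surplus available to absorb its deficit, and the per-edge charging you propose does not aggregate: several $M^-$ vertices can share a single $M^+$ neighbor and exhaust its surplus, $L$ may be empty, and an isolated vertex of $M^-$ has no $E_A$-edge to charge against at all.

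Here is a concrete failure. Take $\beta=0.9$ (so $t\approx 0.645$), let $E_A$ be a star with center $v$, $c_v=0.6$, and three leaves $u_1,u_2,u_3$ with $c_{u_i}=0.4$ (realizable by letting $E_B=\emptyset$ with probability $0.6$ and $E_B$ a triangle on the $u_i$ with probability $0.4$, tie-breaking the MVC toward $\{u_1,u_2,u_3\}$). Every $E_A$-edge satisfies $c_u+c_v=1$, all four vertices lie in $M$, and
\[
\E[w(Y)] \;=\; 0.6\,(1-1.1\cdot 0.6)+3\cdot 0.4\,(1-1.1\cdot 0.4)\;=\;0.876\;>\;0.81\;=\;\frac{\beta}{2}\sum_{v}c_v,
\]
so your intermediate bound on $\E[w(Y)]$ is false even though the lemma itself holds (the true minimum-weight cover is $\{v\}$, of weight $0.34$). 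The constraint $c_u+c_v\ge1$ places no lower bound on the $E_A$-degree of an $M^+$ vertex, so the Hall-type certificate you hope to extract need not exist. The paper instead derives $\sum_{v\in S_2}c_v(1-2c_v)\le 0$ directly from the balance condition via Lemma~\ref{lem:middle region}, proved by induction on support size, and a one-line rearrangement then gives $\E[\sum_{S_2\cap I}w_v]\le\frac{\beta}{2}\sum_{S_2}c_v$. Salvaging your write-up requires switching to the data-dependent threshold and the balance lemma, not a fractional-matching argument on $E_A$.
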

It helps to first establish the following claim that may be of independent interest.  
\begin{lemma}
  \label{lem:middle region}
Let $w: (0,1] \rightarrow \Re_{\ge 0}$ be a function with finite
  support\footnote{We define the support of $w$ to be the number of
    elements $x \in (0,1]$ for which $w(x) > 0$.}.  Then, we have
\[
\left(\forall x \in [1/2, 1]: \sum_{y \in [x,1]} w(y) \cdot y \ge \sum_{y \in [0,1-x]} w(y) \cdot y\right) \implies
\sum_{y \in (0,1]} w(y) \cdot y (1 - 2y) \le 0.
\]
\end{lemma}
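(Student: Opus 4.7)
The plan is a Fubini / layer-cake argument that rewrites both sides of the inequality as integrals of tail sums of the form appearing in the hypothesis, and then reduces the desired bound to the hypothesis applied pointwise. Let me write $\mu(A) := \sum_{y \in A} w(y) \cdot y$ for the finitely supported nonnegative weighting that naturally arises, so the goal $\sum_{y} w(y) \, y \, (1-2y) \le 0$ is the statement $\int_{(0,1]} (1-2y)\, d\mu(y) \le 0$. Because the integrand $y(1-2y)$ vanishes at $y = 1/2$ and has opposite signs on either side, this is equivalent to
\[
\sum_{y \le 1/2} w(y)\, y\, (1 - 2y) \;\le\; \sum_{y \ge 1/2} w(y)\, y\, (2y - 1),
\]
i.e., that the positive contribution from the left half is dominated by the negative contribution from the right half.

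The key trick is the integral identity $1 - 2y = \int_y^{1/2} 2\, dt$ valid for $y \le 1/2$, and symmetrically $2y - 1 = \int_{1/2}^y 2\, dt$ for $y \ge 1/2$. Substituting these and swapping the order of the (finite) summation and the integration, the left-hand side above becomes $2 \int_0^{1/2} \mu((0,t])\, dt$ and the right-hand side becomes $2 \int_{1/2}^1 \mu([t,1])\, dt$. A change of variables $t = 1 - s$ in the left integral maps $[0,1/2]$ onto $[1/2,1]$, so the desired inequality reduces to
\[
\int_{1/2}^1 \mu\bigl((0,1-s]\bigr)\, ds \;\le\; \int_{1/2}^1 \mu\bigl([s,1]\bigr)\, ds,
\]
which in turn is implied by the pointwise bound $\mu((0, 1-s]) \le \mu([s, 1])$ for each $s \in [1/2, 1]$. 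This pointwise bound is exactly the hypothesis (taking $x = s$), noting that $0$ is not in the support of $w$ and so $\mu([0, 1-s]) = \mu((0, 1-s])$.

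I do not anticipate a significant obstacle. The only mild bookkeeping concern is the boundary point $y = 1/2$ and which half-interval to assign it to when splitting, but since $y(1-2y)$ vanishes there this choice is immaterial. The conceptual content of the argument is simply that integrating $1-2y$ against the $\mu$-measure naturally produces the two ``symmetric tail'' quantities whose comparison is exactly what the hypothesis asserts.
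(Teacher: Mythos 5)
Your proof is correct, and it takes a genuinely different route from the paper's. The paper proves the lemma by induction on the size of the support of $w$: it identifies the smallest point $\alpha \ge 1/2$ with positive weight, performs a case analysis on whether the interval $[1-\alpha, 1/2)$ carries any weight, and in each case constructs a modified weight function with strictly smaller support that still satisfies the hypothesis, reducing to the inductive hypothesis. This is hands-on but somewhat opaque about \emph{why} the tail condition forces the conclusion. Your Fubini/layer-cake argument is more transparent: splitting the sum at $y = 1/2$, writing $|1-2y| = 2\int$ of an indicator, swapping sum and integral (justified because the support is finite, so everything is a finite linear combination of Riemann-integrable functions), and changing variables $t \mapsto 1-s$ converts the target inequality into
\[
\int_{1/2}^1 \mu\bigl((0,1-s]\bigr)\,ds \;\le\; \int_{1/2}^1 \mu\bigl([s,1]\bigr)\,ds,
\]
which is just the hypothesis integrated over $s \in [1/2,1]$. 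In other words, your argument shows that the conclusion is literally an average of the hypothesized tail inequalities, which is a cleaner conceptual picture than the paper offers. A minor bonus is that your argument does not actually use finiteness of the support beyond making the Fubini swap trivial; it would extend verbatim (via Tonelli) to any nonnegative measure on $(0,1]$ with $\int y\,d\mu < \infty$, whereas the paper's induction on support size is specific to the finitely supported case. Your bookkeeping around the boundary point $y = 1/2$ and around $\mu(\{0\}) = 0$ is also handled correctly.
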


\label{app:middle region}
\begin{proof}
  Our proof is by induction on the size of the support of $w$.  For
  the base case, we let the support be empty (so, $w(x) = 0$ for all
  $x$), in which case the claim holds trivially.  Fix integer $n > 1$.
  Suppose the claim of the lemma holds for all functions with
  support less than $n$.  Let $w$ be a function with support of
  size $n$ that satisfies the condition of the lemma.

  Since the support of $w$ is positive, $\sum_{y \in (0,1]}
    w(y) \cdot y > 0.$  By the condition of the lemma we have $\sum_{y \in
      \cap [1/2,1]} w(y) \cdot y \ge \sum_{y \in (0,1/2]} w(y) \cdot
    y.$  Therefore, $\sum_{y \in [1/2,1]} w(y) \cdot y > 0.$  It
    follows that there exists $y \ge 1/2$ with $w(y) > 0$.  Let
    $\alpha$ be the minimum $y \in [1/2, t]$ with $w(y) > 0$.

  Let $T$ denote $\{y \in [1-\alpha, 1/2): w(y) > 0\}$.  We consider
    two cases depending on whether $T$ is empty.  We first consider
    the case where $T$ is empty.  Let $w'$ be identical to $w$ except
    that $w'(\alpha) = 0$.  By our definition of $w'$, we have that
    for all $x \in (\alpha, 1]$
\begin{eqnarray*}
  \sum_{y \in [x,1]} w'(y) \cdot y = \sum_{y \in [x,1]} w(y) \cdot y \ge  
\sum_{y \in [0,1-x]} w(y) \cdot y =  
\sum_{y \in [0,1-x]} w'(y) \cdot y.
\end{eqnarray*}
Since $w'(y) = 0$ for all $y \in [1-\alpha,\alpha]$, we obtain that
the condition of the lemma holds for $w'$, whose support is strictly
less than that of $w$.  By the induction hypothesis, we have
$\sum_{y \in (0,1]} w'(y) \cdot y (1 - 2y) \le 0$,
which yields the following since $\alpha \ge 1/2$.
\[
\sum_{y \in (0,1]} w(y) \cdot y (1 - 2y) = w(\alpha) \alpha(1-2\alpha)
  + \sum_{y \in (0,1]} w'(y) \cdot y (1 - 2y) \le 0.
\]
We now consider the case where $T$ is nonempty.  We define a new
function $w_1: (0,1] \rightarrow \Re_{\ge 0}$.
  \[
  w_1(y) = \left\{
  \begin{array}{ll}
    0 & y \in (1 - \alpha, 1/2)\\
    \sum_{y \in [1-\alpha, 1/2)} w(y) y/(1-\alpha) & y = 1 - \alpha\\
    w(y) & \mbox{otherwise}
  \end{array}
  \right.
  \]
We observe that for any $x > \alpha$ we have \[\sum_{y \in [x,1]} w_1(y) y =
\sum_{y \in [x,1]} w(y) y \,\,\,\,\,\,\,\,\, \,\,\text{  and} \,\,\,\,\,\,\,\, \,\, \sum_{y \in (0,1-x]} w_1(y) y =
    \sum_{y \in [x,1]} w(y) y.\]  Therefore, the condition of the lemma
    holds for $w_1$ for $x > \alpha$.  For $x \le \alpha$, we have
\begin{eqnarray*}
  \sum_{y \in [x,1]} w_1(y) y & = & \sum_{y \in [1/2,1]} w_1(y) y\\
  & \ge & \sum_{y \in (0,1/2]} w(y) y\\
  & = & \sum_{y \in [0,1-\alpha)} w(y) y + \sum_{y \in [1-\alpha]} w(y) y\\
  & = & \sum_{y \in [0,1-\alpha)} w_1(y) y +  w_1(1-\alpha) (1 - \alpha)\\
  & = & \sum_{y \in (0,1/2]} w(y) y.
\end{eqnarray*}
Thus, $w_1$ satisfies the lemma condition.  We next define
$w_2$ to be identical to $w_1$ except that \[w_2(\alpha) = w_1(\alpha)
- \min\{w_1(1-\alpha), w_1(\alpha)\} \;\;\;\; \text{and} \;\;\;\; w_2(1-\alpha) =
w_1(1-\alpha) - \min\{w_1(1-\alpha), w_1(\alpha)\}.\]  Note that
$w_2(\alpha) = 0$ or $w_2(1 - \alpha) = 0$, so the support of $w_2$ is
strictly less than the support of $w_1$ and hence that of $w$.
Also, $w_2$ satisfies the lemma condition.  Therefore,
we derive
\begin{eqnarray*}    
 & &  \sum_{y \in (0,1]} w(y) \cdot y (1 - 2y)\\
 & = & \sum_{y \in (0,1]} w_1(y) \cdot y (1 - 2y) -
  w_1(1-\alpha) \cdot (1 - \alpha) (2\alpha -1) + 
  \sum_{y \in [1-\alpha,1/2)} w(y) \cdot y (1 - 2y)\\ 
  & = & \sum_{y \in (0,1]} w_1(y) \cdot y (1 - 2y) -
  \sum_{y \in (1-\alpha,1/2)} w(y) \cdot y (2\alpha -1) + 
  \sum_{y \in (1-\alpha,1/2)} w(y) \cdot y (1 - 2y)\\ 
  & \le & \sum_{y \in (0,1]} w_1(y) \cdot y (1 - 2y)
   + \sum_{y \in (0,1]} w_2(y) \cdot y (1 - 2y)\\
   & \le & 0,
\end{eqnarray*}
where the last step follows from the induction hypothesis.  This
completes the proof of the induction step and that of the lemma.
\end{proof}

\noindent\textbf{Proof of Lemma~\ref{claim:vertex cover weight}.}
    Consider a vertex cover $I := \{ v : c_v > t \} \cup \{MVC(G_i \sim \mathcal{G}) \cap v :
    c_v \in (1-t , t) \}$, where $MVC(G_i \sim \mathcal{G})$ is the
    minimum vertex cover of a graph $G_i$ drawn randomly according to the distribution
    $\mathcal{G}$ and $t$ is the smallest $t \in [0.5, 1]$ such that
    $\sum_{v : c_v > t} c_v \leq \sum_{v: c_v < 1-t} c_v.$
    Since $X$ is a minimum weighted vertex cover of $G^A$, we have that 
    \[\sum_{v \in X} w_v \leq \underset{G_i \sim \mathcal{G}}{\E}[\sum_{v \in I} w_v]\]
    Let $S_1 = \{v: c_v < 1-t\}$, $S_2 = \{v: 1-t \le c_v \le t\}$, and $S_3 = \{v: c_v > t\}$.  Then,
   \[\E \left[\sum_{(S_1 \& S_3) \cap I} w_v\right] = \sum_{S_3} w_v = \sum_{S_3} (1 - 2c_v + \beta c_v) \leq \sum_{S_3} \beta c_v \leq \frac{\beta}{2} \left(\sum_{S_3} c_v + \sum_{S_1} c_v\right) = \frac{\beta}{2} \sum_{S_1 \& S_3} c_v,\]
    where we used $c_v \ge 1/2$ for $v \in S_3$ and $\sum_{S_3} c_v \leq \sum_{S_1} c_v$ by definition of $t$.

    We now consider the set $S_2$.  We derive
    \begin{eqnarray*}
      \E \left[\sum_{S_2 \cap I} w_v\right] & = & \sum_{v \in S_2} c_v(1 - (2-\beta)c_v) = \sum_{v \in S_2} \left(c_v(1 - 2c_v) + \beta c_v^2\right)\\
      & = & \sum_{v \in S_2} \left(c_v(1 - 2c_v) - \frac{\beta}{2} c_v(1-2c_v) + \frac{\beta}{2} c_v\right)\\
      & = & \left(1 - \frac{\beta}{2}\right) \sum_{v \in S_2} c_v(1 - 2c_v) + \frac{\beta}{2} \sum_{v \in S_2} c_v
      \le \frac{\beta}{2} \sum_{v \in S_2} c_v,
    \end{eqnarray*}
    where the last inequality follows from Lemma~\ref{lem:middle
      region} (with $w(x)$ set to the number of vertices $v$ for which
    $c_v = x$, if $x \in S_2$, and 0 if $x \notin S_2$).

    This yields ${\E}[\sum_{v \in I} w_v] \le \frac{\beta}{2} \sum_{v} c_v$
    implying that $\sum_{v \in X} w_v \le \frac{\beta}{2} \sum_{v}
    c_v$.\qed

\newcommand{\eps}{\varepsilon}
\section{A Lower Bound for the Two-Party Case}
\label{sec:lower bound}
We show the following lower bound, which establishes the tightness of our protocol for the two-party case.  
\begin{lemma}
\label{lemma:lower-bound}
For any constant
$\eps > 0$, there exists a constant $c > 0$ such that any vertex cover
computed by a two party protocol with $n^{1 + c/\lg\lg n}$ communication complexity
has an approximation ratio of at least $3/2 - \eps$.
\end{lemma}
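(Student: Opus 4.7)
The plan is to reduce from the Goel--Kapralov--Khanna one-way communication lower bound for bipartite matching~\cite{DBLP:conf/soda/GoelKK12} by invoking K\"onig's theorem. Recall that~\cite{DBLP:conf/soda/GoelKK12} exhibit a distribution $\mathcal{D}$ on edge-partitioned bipartite graphs $G=G_A\cup G_B$ on $n$ vertices, built from Ruzsa--Szemer\'edi graphs with $r=n^{\Omega(1/\lg\lg n)}$ induced matchings, which is a uniform mixture of a \emph{yes} sub-distribution (instances with $|\mathrm{MM}(G)|=m$) and a \emph{no} sub-distribution (instances with $|\mathrm{MM}(G)|\le (2/3+o(1))m$), where $m=\Theta(n)$. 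Their information-theoretic (boosted-indexing) argument implies that for some constant $c'>0$, no one-way two-party protocol exchanging $n^{1+c'/\lg\lg n}$ bits can decide yes vs.\ no with more than $o(1)$ statistical advantage.

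Next I would transfer this gap to the vertex-cover side: since every $G$ in the support of $\mathcal{D}$ is bipartite, K\"onig's theorem gives $|\mathrm{MVC}(G)|=|\mathrm{MM}(G)|$, so yes instances have $|\mathrm{MVC}(G)|=m$ while no instances have $|\mathrm{MVC}(G)|\le (2/3+o(1))m$. Suppose, for contradiction, that some one-way protocol $P$ uses $n^{1+c/\lg\lg n}$ bits, always returns a valid vertex cover $\out$ of $G$, and satisfies $\E[|\out|]\le (3/2-\eps)|\mathrm{MVC}(G)|$ on every instance. On yes instances, $|\out|\ge m$ deterministically because every vertex cover has size at least $|\mathrm{MVC}|$. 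On no instances,
\[
\E[|\out|]\le (3/2-\eps)(2/3+o(1))m=(1-2\eps/3+o(1))m,
\]
so Markov's inequality yields $\Pr[|\out|<m]\ge 2\eps/3-o(1)=\Omega(\eps)$. Consequently the decision rule ``declare yes iff $|\out|\ge m$'' distinguishes the two sub-distributions with advantage $\Omega(\eps)$, contradicting the indistinguishability statement once $c=c(\eps)$ is chosen smaller than $c'$.

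The main obstacle is extracting the GKK lower bound in the precise form needed above: it is typically phrased as hardness of approximating matching size (or of outputting an approximate matching), whereas my reduction needs the underlying statistical indistinguishability of the two sub-distributions of $\mathcal{D}$ as seen through a low-communication protocol's transcript. The GKK proof, which proceeds via a boosted-indexing reduction, actually establishes such a transcript-distance bound directly, so the ``slight adjustment'' advertised in the introduction is to re-state their argument in this decision form. The remaining work is routine bookkeeping of lower-order terms so that the $\Omega(\eps)$ advantage in the distinguisher dominates the $o(1)$ indistinguishability bound once $c$ is chosen as a sufficiently small function of $\eps$ and $n$ is large enough.
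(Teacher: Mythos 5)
Your proposal rests on a mischaracterization of the GKK construction, and this is exactly the obstacle the paper flags when it remarks that ``there is no direct reduction between the two problems in our communication model.'' In the GKK hard distribution (and in the variant used in Section~\ref{sec:lower bound}), the maximum matching size --- and hence, by K\"onig, the minimum vertex cover size --- is \emph{essentially fixed} across all instances in the support: the hidden index $r$ changes which edges of the Ruzsa--Szemer\'edi graph should be kept, but not the cardinality of an optimum. In the paper's construction, for every choice of $r$, the optimum vertex cover has size exactly $(1 + 2\eps_1 + \eps_2)n$ (e.g.\ $\widetilde{P}\cup(P\setminus P_r)\cup(Q\setminus Q_r)$), and the maximum matching (a perfect matching on $P'\cup(P\setminus P_r)$, one on $Q'\cup(Q\setminus Q_r)$, plus $M'_r$) has exactly the same size. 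There is therefore no ``yes'' sub-distribution with $|\mathrm{MM}|=m$ and ``no'' sub-distribution with $|\mathrm{MM}|\le(2/3+o(1))m$ to distinguish between. GKK's hardness is a hardness of \emph{producing} a large matching (the protocol cannot tell which induced matching $M_r$ to commit to), not of \emph{estimating} the matching size; your Markov/distinguisher step has nothing to bite on because $\mathrm{MVC}(G)$ carries no information about $r$.

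The paper's actual argument is direct and does not pass through any matching result. Alice's $s$-bit message partitions her possible inputs $\mathcal{G}_A$ into $2^s$ classes $\Gamma(H)$; Bob, seeing only the message, must output a set that covers the union of the edge sets of \emph{every} graph in the class consistent with what he received, because all of them are possible. The counting argument in Lemma~\ref{lemma: union} then shows that for most $H$ (and most $r$), the union of edges over $\Gamma(H)$ contains nearly all of $M_r$, forcing Bob's output to include about $n/2$ extra vertices beyond the optimum, giving the $3/2-\eps$ ratio. To make your plan work you would need a \emph{different} hard distribution in which the vertex-cover (equivalently, matching) value itself has a $3/2$ gap between two indistinguishable sub-distributions at the $n^{1+\Omega(1/\lg\lg n)}$ communication level; that is a matching-\emph{size} lower bound, which GKK does not prove and which is not known to hold at this communication threshold. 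Reconstructing such a distribution and its indistinguishability would be a new argument, not a ``slight adjustment'' of GKK.
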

\begin{proof}
    
\label{app:lower bound}
Let $G$ be a Rusza-Szemeredi (bipartite) graph $(P, Q, E)$ formed by
$n$ vertices on each side and $k$ induced matchings $M_1 \ldots, M_k$,
where $k = n^{\Omega(1/\lg\lg n)}$ and $|M_i| = (1/2 - \eps_1)n$ for $1 \le i
\le k$, where $\eps_1 > 0$ is a constant that can be made arbitrarily
small.  Rusza-Szemeredi graphs with the preceding parameters have been
shown to exist~\cite{DBLP:conf/soda/GoelKK12}.  We generate a random
bipartite graph $G' = (P \cup P', Q \cup Q', E_1 \cup E_2)$, with a total of 
$(3+2\eps_1)n$ vertices, as follows:
\begin{enumerate}
\item
  $P$, $Q$ are as in $G$; $P'$ (resp., $Q'$) is a set of $(1/2 +
  \eps_1)n$ vertices disjoint from $P$ (resp., $Q$).
\item
  Select a subset $M_i'$ of $\eps_2 n$ edges uniformly at random from
  $M_i$, for $1 \le i \le k$, independently, where $\eps_2 > 0$ is a
  constant that can be set arbitrarily small.  Set $E_1 = \cup_{1 \le
    i \le k} M'_i$; we thus have $|E_1| = k \eps_2 n$.
\item
  Choose $r$ uniformly at random from $\{1, \ldots, k\}$.  Let $P_r$
  and $Q_r$ denote the vertices of $M_r$ in $P$ and $Q$, respectively.
  Let $M_P$ and $M_Q$ denote a perfect matching between $P'$ and $P
  \setminus P_r$ and between $Q'$ and $Q \setminus Q_r$,
  respectively.  Set $E_2$ to $M_P \cup M_Q$.
\end{enumerate}
In the two-party instance, Alice receives the bipartite graph $(P
\cup P', Q \cup Q', E_1)$ and Bob receives the bipartite graph
$(P \cup P', Q \cup Q', E_2)$.  The minimum vertex cover of $G'$
has size at most $\eps_2 n + 2(1/2 + \eps_1)n = (1 + \eps_2 +
2\eps_1)n$; for example, if $\widetilde{P}$ denotes the set of vertices of $M'_r$
in $P_r$, then $\widetilde{P} \cup (P \setminus P_r) \cup (Q \setminus Q_r)$ is a
vertex cover of this size.  We will argue below that with probability
at least $1 - o(1)$, any protocol with $O(n)$ communication produces a
vertex cover of size at least $(3/2 + 2\eps_2 - \eps_3)n$, where
$\eps_3 > 0$ is a constant that can be made arbitrarily small by
suitably setting other constants, if necessary.

The proof is via a counting argument.  Let ${\cal G}_A$ denote the
collection of all possible graphs that can be presented to Alice.  By
our construction above, we have
\[
|{\cal G}_A| \ge \binom{(1/2 - \eps_1)n}{\eps_2 n}^k \ge \left(\frac{(1/2 - \eps_1)n}{\eps_2 n}\right)^{\eps_2 n k}. 
\]
Suppose Alice sends a message with $s$ bits to Bob; let $\phi: {\cal
  G}_A \rightarrow \{0,1\}^s$ denote the mapping that Alice uses to
map the graph she receives to the $s$-bit message she sends.  For any
graph $H \in {\cal G}_A$, let $\Gamma(H) = \{H': \phi(H') =
\phi(H)\}$.  Since Bob needs to produce a valid vertex cover under all
inputs, Bob needs to ensure that the vertex cover output for any input
$H$ to Alice should cover every edge in the union of all graphs in
$\Gamma(H)$.  Since the number of possible outputs of $\phi$ is $2^s$,
a simple averaging argument implies that there exists an $H$ such that
$|\Gamma(H)| \ge |{\cal G}_A|/2^s$.

\begin{lemma}
  \label{lemma: union}
For any subset ${\cal F}$ of ${\cal G}$, let $I \subseteq \{1, \ldots,
k\}$ be the set of indices such that union of all of the edges in the
graphs in ${\cal F}$ has at least $(1/2 - \eps_3)n$ edges from $M_i$,
for each $i \in I$, where $\eps_3 > \eps_1$ is an arbitrary positive
constant.  If $|{\cal F}| \ge |{\cal G}_A|/2^{s(1 + o(1))}$ and
$s = o(nk)$, then $|I| = k(1 - o(1))$.
  \end{lemma}
\begin{proof}
  Fix an $i$ in $[k]$.  Let $m_i$ denote the number of edges from
  $M_i$ contained in the union of all of the edges in the graphs in
  ${\cal F}$.  Then, if $I$ is the set of indices as defined in the
  lemma, we have the following upper bound on $|{\cal F}|$.
  \begin{eqnarray*}
    |{\cal F}| & \le & \prod_i \binom{m_i}{\eps_2 n}\\
    & \le & \binom{(1/2 - \eps_1)n}{\eps_2 n}^{|I|} \binom{(1/2 - \eps_3)n}{\eps_2 n}^{k-|I|}\\
    & \le & |{\cal G}_A| \left(\frac{\binom{(1/2 - \eps_3)n}{\eps_2 n}}{\binom{(1/2 - \eps_1)n}{\eps_2 n}}\right)^{k-|I|}\\
    & \le & |{\cal G}_A| \left(\frac{1/2 - \eps_3 - \eps_2}{1/2 - \eps_1 - \eps_2}\right)^{\eps_2 n (k-|I|)}\\
    & \le & |{\cal G}_A|/2^{c n (k - |I|)},
  \end{eqnarray*}
  for some constant $c > 0$, which depends on $\eps_1, \eps_2,
  \eps_3$.  Since $|{\cal F}| \ge |{\cal G}_A|/2^{s(1 + o(1))}$, it follows that $cn(k - |I|)$ is at most $s(1 + o(1))$.  Since $s
  = o(nk)$, we obtain $|I| \ge k(1 - o(1))$.
\end{proof}

For a uniformly random chosen graph $H \in {\cal G}_A$, with
probability at least $1 - o(1)$, $|\Gamma(H)| \ge |{\cal G}_A|/2^{s(1
  + o(1))}$; this is because the number of graphs $H$ in ${\cal G}_A$
that have $|\Gamma(H)|$ less than $|{\cal G}_A|/2^{s(1 + o(1))}$ is at
most $2^s \cdot |{\cal G}_A|/2^{s(1 + o(1))}$, which is a $o(1)$
fraction of ${\cal G}_A$.  By Lemma~\ref{lemma: union}, it follows
that for a randomly chosen matching $M_r$, with probability $1 -
o(1)$, the union of all of the edges in the graphs in $\Gamma(H)$
contains at least $(1/2 - \eps_3)n$ edges from $M_r$.  Since Bob has
to cover all the edges in this union, the vertex cover returned
includes at least $n/2 - \eps_3 n$ of the vertices of $M_r$ and at
least the $2(n/2 + \eps_2 n)$ vertices needed to cover the edges of
$E_2$.  This implies a vertex cover of size at least $3n/2 + 2\eps_2 n
- \eps_3n$ with probability at least $1 - o(1)$.

This completes the proof of the desired claim that any protocol with
$o(nk) = n^{1 + \Omega(1/\lg\lg n)}$ communication (for a suitable hidden constant in the $\Omega$ term) produces a vertex cover of size at least $(3/2 +
2\eps_2 - \eps_3)n$ with probability $1 - o(1)$.  Therefore, there exists a constant $c > 0$ such that no vertex
cover computed by a two party protocol with $n^{1 + c/\lg\lg n}$ communication
complexity can have an approximation ratio of any constant smaller
than $3/2$.

\end{proof}
Our lower bound
applies to bipartite graphs. Our proof follows the approach
of~\cite{DBLP:conf/soda/GoelKK12} who established a similar lower
bound for the maximum matching problem in bipartite graphs.  While the
size of a maximum matching equals the size of a minimum vertex cover
in bipartite graphs, there is no direct reduction between the two
problems in our communication model.  Indeed, though the specific
graph construction we use for the vertex cover lower bound follows the
same framework as for the maximum matching lower bound, the parameters
and the specific arguments are different.  Our parameters for the
probability distribution over the bipartite graphs are similar to
those used in the lower bound for stochastic vertex cover
in~\cite{DBLP:conf/stoc/DerakhshanDH23}.
\section{Proof of the Main Theorem}
\label{sec:main}
In this section, we  put all the pieces together and provide a formal proof for our main result using Lemma~\ref{lemma:approximation}, Lemma~\ref{lemma:communication}, and Lemma~\ref{lemma:lower-bound}.
\begin{theorem*}[restated]
For any $k \geq 2$ and any desirably small $\epsilon > 0$, there exists a randomized MVC protocol in the $k$-party one-way communication model with an expected approximation ratio of $(2 - 2^{-k+1} + \epsilon)$, in which each party communicates a message of size \(O_{k, \epsilon}(n)\). This approximation ratio is tight for $k=2$ up to a factor of $1+\epsilon$.
\end{theorem*}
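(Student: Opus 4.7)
The plan is to assemble the theorem directly from the three main ingredients already established: the $k$-party protocol of Algorithm~\ref{algk} together with its approximation guarantee (Lemma~\ref{lemma:approximation}), its communication bound (Lemma~\ref{lemma:communication}), and the lower bound for $k=2$ (Lemma~\ref{lemma:lower-bound}). Since each of these is a self-contained statement proved earlier, the main theorem is obtained essentially by bookkeeping over the parameters.

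\smallskip

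\noindent\emph{Upper bound.} Fix $k\ge 2$ and a target constant $\epsilon>0$. I would first set an internal parameter $\epsilon' := \min\{\epsilon/5,\, 1/5\}$, so that the precondition $\epsilon' \in (0, 1/5)$ of Algorithm~\ref{algk} is met. Running Algorithm~\ref{algk} with this $\epsilon'$, Lemma~\ref{lemma:approximation} immediately yields
\[
\mathbb{E}\bigl[\pi_k(G)\bigr] \;\le\; \bigl(2 - 2^{-k+1} + 5\epsilon'\bigr)\,\tau(G) \;\le\; \bigl(2 - 2^{-k+1} + \epsilon\bigr)\,\tau(G),
\]
giving the claimed expected approximation ratio. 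For the message length, Lemma~\ref{lemma:communication} bounds the number of vertex covers communicated between any two parties by $(k-1)!\,(\log_{1+\epsilon'}2)^{k-1}$, and each subset is encoded in $n$ bits, so each message has size $(k-1)!\,(\log_{1+\epsilon'}2)^{k-1}\cdot n = O_{k,\epsilon}(n)$, as the prefactor depends only on $k$ and $\epsilon$ and is independent of $n$.

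\smallskip

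\noindent\emph{Tightness for $k=2$.} For the lower-bound half, plug $k=2$ into the stated approximation ratio to get $2 - 2^{-1} + \epsilon = 3/2 + \epsilon$. Lemma~\ref{lemma:lower-bound} states that for any constant $\eps>0$ there is a constant $c>0$ such that every two-party protocol with communication at most $n^{1+c/\lg\lg n}$ has approximation ratio at least $3/2-\eps$. Since our protocol uses $O_{k,\epsilon}(n) = o\bigl(n^{1+c/\lg\lg n}\bigr)$ communication, any improvement of the ratio to below $3/2$ would contradict Lemma~\ref{lemma:lower-bound}; hence the guarantee $3/2+\epsilon$ is tight up to the $(1+\epsilon)$ factor.

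\smallskip

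\noindent\emph{Main obstacle.} There is essentially no technical obstacle at this point, since all of the real work — the game formulation (Definition~\ref{def:Game}), the minimax step (Lemmas~\ref{lemma:pure}--\ref{lemma:mixed}), the weighted-vertex-cover bound (Lemma~\ref{claim:vertex cover weight}), the inductive analysis of the protocol (Lemma~\ref{lemma:approximation}), and the Ruzsa–Szemerédi lower bound (Lemma~\ref{lemma:lower-bound}) — has already been carried out. The only care required is the constant rescaling $\epsilon \mapsto \epsilon/5$ so that the $5\epsilon'$ slack in Lemma~\ref{lemma:approximation} produces exactly the advertised $+\epsilon$ error in the theorem, and a one-line check that the $O_{k,\epsilon}(n)$ bound from Lemma~\ref{lemma:communication} sits strictly below the $n^{1+c/\lg\lg n}$ threshold in Lemma~\ref{lemma:lower-bound} so the tightness claim applies.
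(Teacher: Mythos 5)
Your proposal is correct and follows essentially the same route as the paper: invoke Lemma~\ref{lemma:approximation} and Lemma~\ref{lemma:communication} with the rescaled parameter $\epsilon'\leq\epsilon/5$ for the upper bound, and invoke Lemma~\ref{lemma:lower-bound} for the $k=2$ tightness claim. Your added remark that $O_{k,\epsilon}(n)=o\bigl(n^{1+c/\lg\lg n}\bigr)$ is a small but welcome clarification that the paper's own proof leaves implicit.
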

\begin{proof}
We prove in Lemma~\ref{lemma:approximation} that
if all the parties use Algorithm~\ref{algk} then the resulting protocol has an approximation ratio of  $(2 - 2^{1-k} + 5 \epsilon')$ for  any $\epsilon' \in (0, 1/5)$. Moreover, based on Lemma~\ref{lemma:communication}, the communication cost of this protocol is $(k-1)! \log_{1 + \epsilon'}2^{k-1} O(n)$ which is linear in $n$ assuming $k$ and $\epsilon'$ are constants. Therefore, the communication cost of our protocol is $O_{k,\epsilon'}(n)$. By choosing a sufficiently small $\epsilon'\leq \epsilon/5$ we get an approximation ratio of $(2 - 2^{1-k} + \epsilon)$ and communication cost of $O_{\epsilon, k}(n)$.
    
Finally, to prove the tightness of our approximation ratio for the two-party  case,  in Lemma~\ref{lemma:lower-bound}, we have a lower bound of $3/2 - \epsilon$ for the expected approximation ratio of two-party protocols with communication cost of $O(n)$. This implies that our protocol for two parties is tight within an additive error of  $\epsilon$.
\end{proof}

\bibliographystyle{plainnat}
\bibliography{ref}

\end{document}